\newcommand{\HorD}{\mathcal{H}}
\newcommand{\VertD}{\mathcal{V}}
\newcommand{\pHor}{\pi_\HorD}
\newcommand{\pVert}{\pi_\VertD}
\newcommand{\Reeb}{\mathcal{R}}
\newcommand{\lsharp}{\sharp_\Lambda}
\newtheorem{theorem}{Theorem}
\newtheorem{proposition}{Proposition}
\newtheorem{lemma}{Lemma}
\theoremstyle{definition}
\newtheorem{definition}{Definition}
\theoremstyle{remark}
\newtheorem{remark}{Remark}
\newcommand*{\lieD}[1]{\mathcal{L}_{#1}}
\newcommand{\fracpartial}[2]{\frac{\partial {#1}}{\partial {#2}}}
\title{The Hamilton--Jacobi theory for contact Hamiltonian systems}
\author{
    {\bf\large Manuel de León}\textsuperscript{1,2},
    {\bf\large Manuel Lainz}\textsuperscript{1} and
    {\bf\large \'Alvaro Mu\~niz--Brea}\textsuperscript{1}
}
\begin{document}

\maketitle

\centerline{\textsuperscript{1}Instituto de Ciencias Matem\'aticas}
\centerline{Consejo Superior de Investigaciones Cient\'ificas}
\centerline{C/ Nicol\'as Cabrera, 13--15, 28049, Madrid. SPAIN}
\vskip 0.5cm
\centerline{\textsuperscript{2}Real Academia de Ciencias}
\centerline{C/ Valverde 22, 28004 Madrid. SPAIN}
\vskip 0.5cm

\begin{abstract}
The aim of this paper is to develop a Hamilton--Jacobi theory for
contact Hamiltonian systems. We find several forms for a suitable Hamilton-Jacobi equation
accordingly to the Hamiltonian and the evolution vector fields for a given Hamiltonian function.
We also analyze the corresponding formulation on the symplectification
of the contact Hamiltonian system, and establish the relations between these two approaches.  In the last section, some examples are discussed.
\end{abstract}

\tableofcontents

\section{Introduction}

The Hamilton–Jacobi equation is an alternative formulation of classical mechanics, equivalent to other formulations such as Lagrangian and Hamiltonian mechanics. The Hamilton–Jacobi equation is particularly useful in identifying conserved quantities for mechanical systems, which may be possible even when the mechanical problem itself cannot be solved completely. 

The Hamilton-Jacobi equation has been extensively studied in the case of symplectic Hamiltonian systems, more specifically, for Hamiltonian functions $H$ defined in the cotangent bundle $T^*Q$ of the configuration space $Q$. The Hamiltonian vector field is obtained by the equation
$$
i_{X_H} \, \omega_Q = dH
$$
where $\omega_Q$ is the canonical symplectic form on $T^*Q$. As we know, bundle coordinates $(q^i, p_i)$ are also
Darboux coordinates so that $X_H$ has the local form
$$
X_H = \frac{\partial H}{\partial p_i} \frac{\partial}{\partial q^i}
-  \frac{\partial H}{\partial q^i} \frac{\partial}{\partial p_i}
$$
The Hamilton-Jacobi problem consists in finding a function $S : Q \longrightarrow \mathbb{R}$ such that
\begin{equation}\label{hj}
H \left( q^i, \frac{\partial S}{\partial q^i} \right) = E,
\end{equation}
for some $E\in \mathbb{R}$.
The above equation (\ref{hj}) is called the Hamilton-Jacobi equation for $H$.
Of course, one easily see that (\ref{hj}) can be written as follows
\begin{equation}\label{hj2}
d (H \circ dS) = 0,
\end{equation}
which opens the possibility to consider general 1-forms on $Q$ (considered as sections of the cotangent bundle
$\pi_Q : T^*Q \longrightarrow Q$).

Recently, the observation that given such a section $\gamma : Q \longrightarrow T^*Q$ permits to relate 
$X_H$ with its projection $X_H^\gamma$ via $\gamma$ onto $Q$, in the sense that
$X_H^\gamma$ and $X_H$ are $\gamma$-related if and only if~\eqref{hj2} holds, provided that
$\gamma$ be closed (or, equivalently,  its image be a Lagrangian submanifold of $(T^*Q, \omega_Q)$)
has opened the possibility to discuss the Hamilton-Jacobi problem in many other
scenarios: nonholonomic systems, multisymplectic field theories, time-dependent mechanics, among others.

In \cite{dLS} we have started the extension of the Hamilton-Jacobi theory 
for contact Hamiltonian systems (see also \cite{dLVreview}). Let us recall that a contact Hamilton system is defined by
a Hamiltonian function on a contact manifold, in our case, the extended cotangent bundle $T^*Q \times \mathbb{R}$
equipped with the canonical contact form $\eta_Q = dz - \theta_Q$,
where $z$ is a global coordinate in $\mathbb{R}$ and $\theta_Q$ the Liouville form on $T^*Q$, with the obvious identifications. 

Contact Hamiltonian systems are widely used in many fields of Physics, like thermodynamics, dissipative systems,
cosmology, and even in Biology (the so-called neurogeometry).
The corresponding Hamilton equations were obtained in 1940 by G. Herglotz
using a variational principle that extends the usual one of Hamilton, but they can be alternatively derived using contact geometry.

The goal of this paper is to continue the study of the Hamilton-Jacobi problem
in the contact context, using the two vector fields associated to the Hamiltonian $H$:
\begin{itemize}
\item the Hamiltonian vector field
$$
X_H = \frac{\partial H}{\partial p_i} \frac{\partial}{\partial q^i} - 
\left(\frac{\partial H}{\partial q^i} + p_i \frac{\partial H}{\partial z} \right)\,  \frac{\partial}{\partial p_i} + 
\left(p_i \frac{\partial H}{\partial p_i} - H\right) \, \frac{\partial}{\partial z}
$$
\item the evolution vector field
$$
\mathcal{E}_H = \frac{\partial H}{\partial p_i} \frac{\partial}{\partial q^i} - 
\left(\frac{\partial H}{\partial q^i} + p_i \frac{\partial H}{\partial z} \right)\,  \frac{\partial}{\partial p_i} + 
p_i \frac{\partial H}{\partial p_i} \, \frac{\partial}{\partial z}
$$
\end{itemize}

We notice that the Hamilton-Jacobi problem has been treated by other authors~\cite{hj1,hj2}, who establish a relationship between the Herglotz variational principle and the Hamilton-Jacobi equation, although their interests are analytical rather than geometrical

The content of the paper is as follows. Section 2 is devoted to introduce the main ingredients of contact manifolds and
contact Hamiltonian systems as well as the interpretation of a contact manifold
as a Jacobi structure. In Section 3 we discuss the different types of submanifolds
of a contact manifold. Section 4 is the main part of the paper; there, we discuss
the Hamilton-Jacobi problem for a contact Hamiltonian vector field as well as for the 
corresponding evolution vector field. The results are more involved than
in the case of symplectic Hamiltonian systems due to the different possibilities
that may occur. In Section 5 we study the relations of the Hamilton-Jacobi problem
for a contact Hamiltonian systems and its symplectification. Finally, some examples are discussed in Section 6.

\section{Contact Hamiltonian systems}

\subsection{Contact manifolds}

Consider a contact manifold~\cite{dLV1,dLR,Bravetti2017} $(M, \eta)$ with contact form $\eta$; this means that
$\eta \wedge d\eta^n \not= 0$ and $M$ has odd dimension $2n+1$.
Then, there exists a unique vector field $\mathcal R$ (called Reeb vector field) such that
$$
i_{\mathcal R} \, d\eta = 0 \; , \; i_{\mathcal R}\, \eta = 1
$$

There is a Darboux theorem for contact manifolds (see \cite{god,libermarle}) so that around each point in $M$ one can find local coordinates 
(called Darboux coordinates) $(q^i, p_i, z)$ 
such that
$$
 \eta = dz - p_i \, dq^i
$$
and we have
$$
\mathcal R = \frac{\partial}{\partial z}
$$

The contact structure  defines an isomorphism between tangent vectors and covectors. For each $x\in M$,
    \begin{eqnarray*}
            \bar{\flat}: T_x M &\to  &   T_x^* M\\
            v            &\mapsto & i_v d \eta  + \eta(v) \eta.
    \end{eqnarray*}

    Similarly, we obtain a vector bundle isomorphism
   $$
 TM  \longrightarrow T^*M 
         $$
    over $M$.

 We will also denote by $\bar{\flat} : \mathfrak{X}(M) \to \Omega^1 (M)$ the corresponding isomorphism of $C^{\infty}(M)$-modules between vector fields and $1$-forms over $M$; 
 $\sharp$ will denote the inverse of $\bar{\flat}$.

Therefore, we have that
    $$
        \bar{\flat}(\Reeb) = \eta,
$$
    so that, in this sense, $\Reeb$ is the dual object of $\eta$.

For a Hamiltonian function $H$ on $M$ we define the Hamiltonian vector field $X_H$ by
$$
\bar{\flat} (X_H) = dH - (\mathcal R (H) + H) \, \eta
$$

In Darboux coordinates we get this local expression

\begin{equation}\label{hcont2}
X_H = \frac{\partial H}{\partial p_i} \frac{\partial}{\partial q^i} - 
\left(\frac{\partial H}{\partial q^i} + p_i \frac{\partial H}{\partial z} \right)\,  \frac{\partial}{\partial p_i} + 
\left(p_i \frac{\partial H}{\partial p_i} - H\right) \, \frac{\partial}{\partial z}
\end{equation}
Therefore, an integral curve $(q^i(t), p_i(t), z(t))$ of $X_H$ satisfies the 
contact Hamilton equations

\begin{eqnarray}\label{hcont3}
\frac{dq^i}{dt} & = & \frac{\partial H}{\partial p_i} \\
\frac{dp_i}{dt} & = & - \left(\frac{\partial H}{\partial q^i} +  p_i \frac{\partial H}{\partial z}\right)\\
\frac{dz}{dt} & = & \left(p_i \frac{\partial H}{\partial p_i} - H\right)
\end{eqnarray}

In addition to the Hamiltonian vector field $X_H$ associated to a Hamiltonian function $H$, there is another
relevant vector field, called {\sl evolution vector field} defined by
$$
\mathcal{E}_H = X_ H + H\Reeb
$$ 
so that it reads in local coordinates as follows

\begin{equation}\label{eval}
\mathcal{E}_H = \frac{\partial H}{\partial p_i} \frac{\partial}{\partial q^i} - 
(\frac{\partial H}{\partial q^i} + p_i \frac{\partial H}{\partial z} )\,  \frac{\partial}{\partial p_i} + 
p_i \frac{\partial H}{\partial p_i} \, \frac{\partial}{\partial z}
\end{equation}

Consequently, the integral curves of $\mathcal{E}_H$ satisfy the differential equations
\begin{eqnarray}\label{eval2}
\frac{dq^i}{dt} & = & \frac{\partial H}{\partial p_i} \\
\frac{dp_i}{dt} & = & - \left(\frac{\partial H}{\partial q^i} +  p_i \frac{\partial H}{\partial z}\right)\\
\frac{dz}{dt} & = & p_i \frac{\partial H}{\partial p_i} 
\end{eqnarray}

\begin{remark}{\rm
The evolution vector field plays a relevant role in the geometric description
of thermodynamics (see \cite{royal,houches}).}
\end{remark}

    Given a contact $2n+1$ dimensional manifold $(M, \eta)$, we can consider the following distributions on $M$, that we will call \emph{vertical} and \emph{horizontal} distribution, respectively:
    \begin{eqnarray*}
        \HorD &= &\ker \eta, \\
        \VertD &= &\ker d \eta.
    \end{eqnarray*}

We have a Withney sum decomposition

    $$
        TM = \HorD \oplus \VertD,
  $$
   and, at each point $x\in M$:
    $$
        T_x M = \HorD_x \oplus \VertD_x.
    $$
    We will denote by $\pHor$ and $\pVert$ the projections onto these subspaces. 
We notice that $\dim \HorD = 2 n$ and $\dim \VertD = 1$, and that $(d\eta)_{|_{\HorD}}$ is non-degenerate
and $\VertD$ is generated by the Reeb vector field $\Reeb$.

\begin{definition}
  \begin{enumerate}
\item A diffeomorphism between two contact manifolds $F:(M,\eta)\to (N, \xi)$ is a \emph{contactomorphism} if
    $$
        F^*\xi = \eta.
   $$

  \item A diffeomorphism $F:(M,\eta)\to (N, \xi)$ is a \emph{conformal contactomorphism} if there exist a nowhere zero function $f\in C^\infty(M)$ such that 
   $$
            F^*\xi = f \eta.
   $$
    
\item A vector field $X \in \mathfrak{X} (M)$ is an \emph{infinitesimal contactomorphism} (respectively \emph{infinitesimal conformal contactomorphism}) if its flow $\phi_t$ consists of contactomorphisms (resp. \emph{conformal contactomorphisms}).
\end{enumerate}
\end{definition}

Therefore, we have

\begin{proposition}

\begin{enumerate}
  \item A vector field $X$ is an infinitesimal contactomorphism if and only if
$$
        {\cal L}_{X} \eta = 0.
  $$

\item    $X$ is an infinitesimal conformal contactomorphism if and only if there exists $g \in C^{\infty}(M)$ such that
$$
        {\cal L}_{X} \eta = g \eta.
$$

    In this case, we say that $(g, X)$ is an \emph{infinitesimal conformal contactomorphism}.
\end{enumerate}
\end{proposition}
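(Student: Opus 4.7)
The plan is to apply the standard identity relating the Lie derivative to the flow, namely $\frac{d}{dt}\phi_t^*\eta = \phi_t^*\mathcal{L}_X\eta$, and then in the conformal case solve a linear ODE at each point to recover the conformal factor.

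For part (1), I would argue as follows. If $X$ is an infinitesimal contactomorphism, then $\phi_t^*\eta = \eta$ for all $t$ in the domain of the flow, and differentiating at $t=0$ gives $\mathcal{L}_X\eta = 0$. Conversely, if $\mathcal{L}_X\eta = 0$, then $\frac{d}{dt}\phi_t^*\eta = \phi_t^*\mathcal{L}_X\eta = 0$, so the 1-form $\phi_t^*\eta$ is constant in $t$ (pointwise on $M$), equal to its value $\eta$ at $t=0$.

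For part (2), the forward direction is analogous: if $\phi_t^*\eta = f_t\,\eta$ with $f_0 = 1$ and each $f_t$ nowhere zero, then differentiating at $t=0$ yields $\mathcal{L}_X\eta = g\,\eta$ with $g := \dot f_0$. The converse requires a small argument. Assume $\mathcal{L}_X\eta = g\,\eta$, and for each point $x$ define $f_t(x)$ as the solution of the scalar linear ODE $\dot f_t(x) = g(\phi_t(x))\,f_t(x)$ with $f_0(x)=1$, so that $f_t(x) = \exp\!\bigl(\int_0^t g(\phi_s(x))\,ds\bigr)$, which is smooth in $x$ and nowhere zero. Setting $\alpha_t := \phi_t^*\eta - f_t\,\eta$, both $\phi_t^*\eta$ and $f_t\,\eta$ satisfy $\frac{d}{dt}\beta_t = (g\circ\phi_t)\beta_t$ (using $\frac{d}{dt}\phi_t^*\eta = \phi_t^*(g\eta) = (g\circ\phi_t)\phi_t^*\eta$), so $\alpha_t$ solves a linear first-order ODE in each fibre $T_x^*M$ with $\alpha_0 = 0$; uniqueness forces $\alpha_t\equiv 0$, i.e. $\phi_t^*\eta = f_t\,\eta$, proving that $\phi_t$ is a conformal contactomorphism.

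The only mildly delicate point is the converse in (2): one must recognise that the ansatz $\phi_t^*\eta = f_t\,\eta$ is not an assumption but a consequence, obtained by solving the pointwise linear ODE and invoking uniqueness fibrewise. The rest is a direct application of the flow characterisation of $\mathcal{L}_X$, and no contact-geometric input beyond the definitions is needed.
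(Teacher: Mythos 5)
Your proof is correct. The paper states this proposition without any written proof, presenting it as an immediate consequence of the definitions via the flow characterization $\frac{d}{dt}\phi_t^*\eta = \phi_t^*\mathcal{L}_X\eta$, and your argument is exactly that standard one, with the only genuinely delicate point --- the converse of (2), recovering $\phi_t^*\eta = f_t\,\eta$ by solving the pointwise linear ODE and invoking fibrewise uniqueness in $T_x^*M$ --- handled correctly.
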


If $(M, \eta)$ is a $(2n+1)$-dimensional contact manifold and take
Darboux coordinates $(q^1,\ldots,q^n,p_1,\ldots,p_n,z)$, then

$$
            \VertD = < \frac{\partial}{\partial z} > \; , \;
            \HorD  = <A_i, B^i>
$$
  where
\begin{eqnarray*}
        A_i &= &\frac{\partial}{\partial q^i} - p_i \frac{\partial}{\partial z}\\
        B^i &=& \frac{\partial}{\partial p_i}.
    \end{eqnarray*}

 $\{A_1, B^1, \ldots, A_n, B^n, \Reeb\}$ and $\{d q^1, d p_1, \ldots, d q^n, d p_n, \eta\}$ are dual basis.

   We also have
$$
        [A_i, B^i ] = -\Reeb
$$

\subsection{Contact manifolds as Jacobi structures}

\begin{definition}
    A Jacobi manifold~\cite{Kirillov,Lich,libermarle} is a triple $(M,\Lambda,E)$, where $\Lambda$ is a bivector field (a skew-symmetric contravariant 2-tensor field) and $E \in \mathfrak{X} (M)$ is a vector field, so that the following identities are satisfied:
   $$
        [\Lambda,\Lambda] = 2 E \wedge \Lambda \; , \;
        {\cal L}_{E} \Lambda = [E,\Lambda] = 0,
$$
    where $[\cdot,\cdot ]$ is the Schouten–Nijenhuis bracket.
\end{definition}

    Given a Jacobi manifold $(M,\Lambda,E)$, we define the \emph{Jacobi bracket}:
    
        \begin{eqnarray*}
           \{\cdot, \cdot\} : C^\infty(M) \times C^{\infty}(M) & \mapsto \mathbb{R}, \\
            (f,g) &\mapsto \{f,g\},
        \end{eqnarray*}
    
   \noindent where
    $$
        \{f,g\} = \Lambda(d f, dg) + f E(g) - g E (f).
    $$

This bracket is bilinear, antisymmetric, and satisfies the Jacobi identity. Furthermore, it fulfills the weak Leibniz rule:
$$
        \operatorname{supp}(\{f,g\}) \subseteq \operatorname{supp} (f) \cap \text{supp} (g).
 $$
    That is, $(C^\infty(M), \{\cdot,\cdot\})$ is a local Lie algebra in the sense of Kirillov. 

Conversely, given a local Lie algebra $(C^\infty(M), \{\cdot,\cdot\})$, we can find a Jacobi structure on $M$ such that the Jacobi bracket coincides with the algebra bracket.

\begin{remark}{\rm The weak Leibniz rule is equivalent to this identity:

$$
\{f, gh\} = g \{f, h\} + h \{f, g\} + gh E(f)
$$
}
\end{remark}

Given a contact manifold $(M,\eta)$ we can define the associated Jacobi structure $(M, \Lambda, E)$ by 
$$
    \Lambda(\alpha,\beta) = - d \eta (\sharp\alpha, \sharp\beta), \quad
    E = - \Reeb,
$$
where $\sharp = \bar{\flat}^{-1}$.  For an arbitrary function $f$ on $M$ we can prove that the Hamiltonian vector field $X_f$ with respect to the contact structure $\eta$ coincides with the one defined by its associated Jacobi structure, say
$$
X_f = \sharp_\Lambda (df) - f {\mathcal R}
$$
where $\sharp_\Lambda$ is the vector bundle morphism from tangent covectors to tangent vectors defined by $\Lambda$,  i.e.
$$
<\sharp_\Lambda (\alpha), \beta> = \Lambda (\alpha, \beta),
$$
for all covectors $\alpha$ and $\beta$.

\section{Submanifolds}

As in the case of symplectic manifolds, we can consider several interesting types of submanifolds of a contact manifold $(M,\eta)$. To define them, we will use the following notion of \emph{complement} for contact structures~\cite{dLV1}:

    Let $(M,\eta)$ be a contact manifold and $x\in M$. Let $\Delta_x\subset T_x M$ be a linear subspace. We define the \emph{contact complement} of $\Delta_x$
    $$
        {\Delta_x}^{\perp_{\Lambda}} = \lsharp({\Delta_x}^o),
    $$
    where 
    ${\Delta_x}^o = \{\alpha_{x}\in T_x^*M \mid \alpha_x(\Delta_x)=0\}$ is the annihilator.

    We extend this definition for distributions $\Delta\subseteq TM$ by taking the complement pointwise in each tangent space.

Here, $\Lambda$ is the associated 2-tensor according to the previous section.

\bigskip

 \begin{definition}
 Let $N\subseteq M$ be a submanifold. We say that $N$ is:
    \begin{itemize}
        \item \emph{Isotropic} if $TN\subseteq {TN}^{\perp_{\Lambda}}$.
        \item \emph{Coisotropic} if $TN\supseteq {TN}^{\perp_{\Lambda}}$.
        \item \emph{Legendrian} or \emph{Legendre} if $TN= {TN}^{\perp_{\Lambda}}$.
    \end{itemize}
\end{definition}

The coisotropic condition can be written in local coordinates as follows.
  
    Let $N\subseteq M$ be a $k$-dimensional manifold given locally by the zero set of functions 
$\phi_a:U\to \mathbb{R}$, with $a\in \{1, …, k\}$. 

    We have that
   $$
        {TN}^{\perp_{\lambda}} = <Z_a \; | \; a=1, \dots, k >
   $$
    where
 $$
        Z_a = \sharp_\Lambda (d \phi_a)
$$

Therefore, $N$ is coisotropic if and only if, $Z_a(\phi_b)=0$ for all  $a,b$. 

Notice that
\begin{equation}\label{za}
Z_a = \left(\frac{\partial \phi_a}{\partial q^i} + p_i \frac{\partial \phi_a}{\partial z}\right) \frac{\partial}{\partial p_i}
+ \frac{\partial \phi_a}{\partial p_i} \left(\frac{\partial}{\partial q^i} - p_i \frac{\partial}{\partial z}\right).
\end{equation}

According to (\ref{za}), we conclude that $N$ is coisotropic if and only if
\begin{equation}\label{za2}
\left(\frac{\partial \phi_a}{\partial q^i} + p_i \frac{\partial \phi_a}{\partial z}\right) \frac{\partial \phi_b}{\partial p_i}
+ \frac{\partial \phi_a}{\partial p_i} \left(\frac{\partial \phi_b}{\partial q^i} - p_i \frac{\partial \phi_b}{\partial z}\right) = 0,
\end{equation}
for all $a,b$.

Using the above results, one can easily prove the following characterization of a Legendrian submanifold.

\begin{proposition}
Let $(M, \eta)$ be a contact manifold of dimension $2n+1$. A submanifold $N$ of $M$ is Legendrian if and only if it is a maximal integral manifold of $\ker \eta$
(and then it has dimension $n$).
\end{proposition}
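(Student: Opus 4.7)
The plan is to establish both implications by a dimension count combined with an inclusion, using two structural facts about $\lsharp$ that can be read off from (\ref{za}): its image is the horizontal distribution $\HorD = \ker\eta$, and its kernel is the line $\langle\eta\rangle$.

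For the direction ``$N$ Legendrian $\Rightarrow$ $N$ is a maximal integral manifold of $\ker\eta$ of dimension $n$'', I would start from
$$
TN = TN^{\perp_\Lambda} = \lsharp(\ann{TN}) \subseteq \im\,\lsharp = \HorD,
$$
which already shows that $N$ is an integral submanifold of $\ker\eta$. The annihilator $\ann{TN}$ has dimension $2n+1-\dim N$ and, since $\eta|_{TN}=0$, contains the line $\langle\eta\rangle = \ker \lsharp$; hence
$$
\dim TN^{\perp_\Lambda} = \dim \ann{TN} - 1 = 2n - \dim N.
$$
Equating this with $\dim TN = \dim N$ forces $\dim N = n$. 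Maximality is automatic, because any integral submanifold of $\ker\eta$ satisfies $d\eta|_{TN}=0$, and the symplectic form $d\eta|_\HorD$ admits isotropic subspaces of dimension at most $n$.

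For the converse, let $N$ be an $n$-dimensional integral manifold of $\ker\eta$. The same computation gives $\dim TN^{\perp_\Lambda} = n = \dim TN$, so it suffices to prove $TN \subseteq TN^{\perp_\Lambda}$. Given $v \in TN \subseteq \HorD$, the natural candidate is $\alpha = \contr{v} d\eta$: a short check in Darboux coordinates using (\ref{za}) shows that $\lsharp(\contr{v} d\eta) = v$ for every $v \in \HorD$, reflecting the non-degeneracy of $d\eta|_\HorD$. Since $\eta|_N = 0$ implies $d\eta|_{TN}=0$, every $u \in TN$ satisfies $\alpha(u) = d\eta(v,u) = 0$, so $\alpha \in \ann{TN}$ and $v = \lsharp(\alpha) \in TN^{\perp_\Lambda}$.

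The step I expect to be the most delicate is the identity $\lsharp(\contr{v} d\eta) = v$ on $\HorD$: it is a Darboux-coordinate check, but the relation between $\lsharp$ and $\bar{\flat}^{-1}$ involves a Reeb-direction correction, so the signs coming from $\Lambda(\alpha,\beta) = -d\eta(\sharp\alpha,\sharp\beta)$ have to be tracked with care.
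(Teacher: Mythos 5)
Your proof is correct, and it is worth noting that the paper gives no actual proof of this proposition: it merely asserts that the characterization follows ``easily'' from the coordinate formula (\ref{za}) for $Z_a = \lsharp(d\phi_a)$ and the coisotropy condition (\ref{za2}), implicitly suggesting one cut $N$ out by functions $\phi_a$ and verify everything in Darboux coordinates. You instead argue intrinsically from the two structural facts $\ker \lsharp = \langle \eta \rangle$ and $\im\,\lsharp = \HorD$; rank--nullity applied to $\ann{TN} \supseteq \langle \eta \rangle$ then gives $\dim {TN}^{\perp_\Lambda} = 2n - \dim N$, which drives both implications at once, and nondegeneracy of $d\eta$ on $\HorD$ yields maximality. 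This buys coordinate independence and makes the dimension bookkeeping transparent, which is more than the paper's hinted route delivers. Two refinements. First, the step you flag as delicate, $\lsharp(\contr{v} d\eta) = v$ for $v \in \HorD$, needs no coordinate check at all: for horizontal $v$ one has $\bar{\flat}(v) = \contr{v} d\eta + \eta(v)\eta = \contr{v} d\eta$ and $(\contr{v} d\eta)(\Reeb) = d\eta(v,\Reeb) = 0$, so in the identity $\lsharp \alpha = \sharp \alpha - \alpha(\Reeb)\Reeb$ (which follows from the paper's own formulas $X_f = \lsharp(df) - f\Reeb$ and $\bar{\flat}(X_f) = df - (\Reeb(f)+f)\eta$, since $\sharp \eta = \Reeb$) the Reeb correction drops out and $\lsharp(\contr{v} d\eta) = \sharp(\bar{\flat}(v)) = v$. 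Second, be careful about claiming your structural facts can be ``read off'' from (\ref{za}): as printed, that formula gives $\eta(Z_a) = -2 p_i\, \partial \phi_a / \partial p_i \neq 0$, contradicting the standard fact $\im\,\lsharp \subseteq \ker \eta$, so it contains a sign slip; the correct expression is $\lsharp(d\phi_a) = \frac{\partial \phi_a}{\partial p_i}\left(\frac{\partial}{\partial q^i} + p_i \frac{\partial}{\partial z}\right) - \left(\frac{\partial \phi_a}{\partial q^i} + p_i \frac{\partial \phi_a}{\partial z}\right)\frac{\partial}{\partial p_i}$. Your two facts are nonetheless true, and the intrinsic derivation just indicated is the safe route to them; with that sourcing corrected, every step of your argument --- both inclusions, both dimension counts, and maximality via the bound $\dim \leq n$ for subspaces of $\HorD_x$ isotropic for $d\eta$ --- is sound.
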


Consider a function $f : Q \times \mathbb{R}$ and let $\eta_Q = dz - \rho^* \theta_Q$ the canonical contact structure
on $T^*Q \times \mathbb{R}$. Here $\rho : T^*Q \times \mathbb{R} \longrightarrow T^*Q$ is the canonical projection, and
$\theta_Q$ is the canonical Liouville form on $T^*Q$. In bundle coordinates $(q^i, p_i, z)$, we have
$$
\eta_Q = dz - p_i \, dq^i
$$
so that  $(q^i, p_i, z)$ are Darboux coordinates.

We denote by $j^1f : Q \longrightarrow T^*Q \times \mathbb{R}$ the 1-jet of $f$, say
$$
j^1f (q^i) = \left(q^i, \frac{\partial f}{\partial q^i}, f(q^i)\right)
$$

Then, one immediately checks that $j^1f(Q)$ is a Legendrian submanifold of $(T^*Q \times \mathbb{R}, \eta_Q)$. Moreover, we have

\begin{proposition}\label{section_coisotropic}
A section $\gamma : Q \longrightarrow T^*Q \times \mathbb{R}$ of the canonical projection
$T^*Q \times \mathbb{R} \longrightarrow Q$ is a Legendrian submanifold of
$(T^*Q \times \mathbb{R}, \eta_Q)$ if and only if $\gamma$ is locally the 1-jet of a function
$f : Q \longrightarrow \mathbb{R}$.
\end{proposition}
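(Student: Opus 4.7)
The plan is to reduce everything to the preceding characterization of Legendrian submanifolds as maximal integral manifolds of $\ker \eta_Q$, and then convert the integral-manifold condition into a PDE on the components of the section by a direct pullback computation.

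First I would write the section in bundle coordinates as
\[
\gamma(q^i) = \bigl(q^i,\, \gamma_i(q),\, g(q)\bigr),
\]
where $\alpha = \gamma_i\, dq^i$ is a local $1$-form on $Q$ and $g\in C^\infty(Q)$. Because $\gamma$ is a section, $\dim \gamma(Q) = n$, which matches the dimension of a Legendrian submanifold of the $(2n+1)$-dimensional contact manifold $(T^*Q\times\mathbb{R},\eta_Q)$. Hence by the previous proposition, $\gamma(Q)$ is Legendrian if and only if it is an integral manifold of $\ker\eta_Q$, and since the dimension is already maximal, this is equivalent to $\gamma^*\eta_Q = 0$.

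Next I would compute the pullback in the Darboux coordinates $(q^i,p_i,z)$ in which $\eta_Q = dz - p_i\, dq^i$:
\[
\gamma^*\eta_Q \;=\; d g \;-\; \gamma_i\, dq^i \;=\; \left(\frac{\partial g}{\partial q^i} - \gamma_i\right) dq^i.
\]
Hence $\gamma^*\eta_Q = 0$ is equivalent to $\gamma_i = \partial g/\partial q^i$ for every $i$, which in turn is exactly the statement $\gamma = j^1 g$ in the given chart. This simultaneously proves both directions: if $\gamma = j^1 f$ then $\gamma^*\eta_Q = df - (\partial f/\partial q^i)\,dq^i = 0$ so $\gamma(Q)$ is an $n$-dimensional integral manifold of $\ker\eta_Q$, hence Legendrian; conversely, if $\gamma(Q)$ is Legendrian then the computation shows $g$ is a local primitive realizing $\gamma$ as $j^1 g$.

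There is essentially no obstacle beyond the bookkeeping: the key observation is that the vertical $z$-component of the section automatically plays the role of the primitive function $f$, so no Poincar\'e-lemma argument on $Q$ is needed (unlike the analogous Lagrangian statement in $T^*Q$, where one recovers only a closed $1$-form and must integrate locally). The word \emph{locally} in the statement is therefore a mild weakening; on each coordinate chart the primitive $g$ is explicitly produced by reading off the $z$-coordinate of $\gamma$.
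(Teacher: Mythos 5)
Your proof is correct and follows exactly the route the paper intends: it invokes the preceding proposition characterizing Legendrian submanifolds as maximal ($n$-dimensional) integral manifolds of $\ker\eta_Q$ and reduces the claim to the one-line pullback computation $\gamma^*\eta_Q = dg - \gamma_i\,dq^i$, which the paper leaves implicit as ``one immediately checks.'' Your closing observation that the $z$-component of the section itself supplies the primitive, so that---unlike the Lagrangian analogue in $T^*Q$---no Poincar\'e lemma is needed, is also accurate.
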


\begin{remark}{\rm The above result is the natural extension of the well-known 
fact that a section $\sigma$ of the cotangent bundle $\pi_Q : T^*Q \longrightarrow Q$ is a Lagrangian submanifold
with respect to the canonical symplectic structure $\omega_Q = - d \theta_Q$ on $T^*Q$
if and only if $\sigma$ is a closed 1-form (and hence, locally exact).
}
\end{remark}

\section{The Hamilton-Jacobi equations}

\subsection{The Hamilton-Jacobi equations for a Hamiltonian vector field}

We consider the extended phase space $T^{*}Q\times \mathbb{R}$, and a Hamiltonian function
$H:T^{*}Q\times \mathbb{R} \rightarrow \mathbb{R}$ (see the diagram below). 
\[
\xymatrix{ T^{*}Q\times \mathbb{R}
\ar[dd]^{\rho} \ar[ddrr]^{z}\ar@/^2pc/[ddrr]^{H}\\
  &  & &\\
T^{*}Q &  & \mathbb{R}}
\]
Recall that we have local canonical coordinates $\{q^i,p_i,z\}, i=1,\dots,n$ such that
the one-form is $\eta_Q=dz-\rho^{*}\theta_Q$, $\theta_Q$ being the canonical 1-form on $T^*Q$, can be locally expressed as follows
\begin{equation}\label{contactoneform}
 \eta_Q=dz-\sum_{i=1}^n p_idq^i.
\end{equation}
$(T^{*}Q\times \mathbb{R},\eta)$ is a contact manifold with 
Reeb vector field 
$\mathcal{R}=\frac{\partial}{\partial z}.$

Consider the Hamiltonian vector field $X_H$ for a given Hamiltonian function, say
\begin{equation}
 X_H=\sharp_{\Lambda}(dH)+H \Reeb.
\end{equation}
In coordinates, it reads
{\begin{footnotesize}
\begin{equation}\label{1hvf}
 X_H= \sum_{i=1}^n\frac{\partial H}{\partial p_i}\frac{\partial}{\partial q^i} -\sum_{i=1}^n\left(p_i\frac{\partial H}{\partial z}+\frac{\partial H}{\partial q^i}\right)\frac{\partial}{\partial p_i} + \sum_{i=1}^n \; \left(p_i\frac{\partial H}{\partial p_i} -H\right)\frac{\partial}{\partial z} 
\end{equation}
\end{footnotesize}}
We also have
\begin{equation*}
 \bar{\flat}{(X_H)}= dH -(\mathcal{R}(H)+H)\eta,
\end{equation*}
where $\flat$ is the isomorphism previously defined. We also have that 
\begin{equation}\label{1exph}
 \eta(X_H)=-H.
\end{equation}
Recall that $(T^{*}Q\times \mathbb{R},\Lambda,\mathbb{R})$ is a Jacobi manifold with $\Lambda$ given in the usual way
(see Section 2.2).
The proposed contact structure provides us with the {\it contact Hamilton equations}.

\begin{equation}\label{hamileq}
\left\{\begin{aligned}
 {\dot q}^i&=\frac{\partial H}{\partial p_i},\\
 {\dot p}_i&=-\frac{\partial H}{\partial q^i}-p_i\frac{\partial H}{\partial z},\\
{\dot z}&=p_i\frac{\partial H}{\partial p_i}-H.
 \end{aligned}\right.
 \end{equation}
for all $i=1,\dots,n$.

Consider $\gamma$ a section of $\pi:T^{*}Q\times \mathbb{R} \rightarrow Q\times \mathbb{R}$, i.e., $\pi\circ \gamma=\text{id}_{Q\times \mathbb{R}}$. We can use $\gamma$ to project $X_H$ on $Q\times \mathbb{R}$
just defining a vector field $X_{H}^{\gamma}$ on $Q\times \mathbb{R}$ by
\begin{equation}\label{hjpar}
 X_H^{\gamma}=T_{\pi}\circ X_{H}\circ \gamma.
\end{equation}
The following diagram summarizes the above construction
\[
\xymatrix{ T^{*}Q\times \mathbb{R}
\ar[dd]^{\pi} \ar[rrr]^{X_H}&   & &T(T^{*}Q\times \mathbb{R})\ar[dd]^{T{\pi}}\\
  &  & &\\
Q\times \mathbb{R} \ar@/^2pc/[uu]^{\gamma}\ar[rrr]^{X^{\gamma}_H}&  & & T(Q\times \mathbb{R})}
\]

Assume that in local coordinates we have
$$
(q^i, z) \mapsto \gamma(q^i, z) = (q^i, \gamma_j(q^i, z), z)
$$
We can compute $T\gamma (X_H^\gamma)$ and obtain

\begin{equation}\label{congamma}
T\gamma (X_H^\gamma) = \frac{\partial H}{\partial p_i} \frac{\partial}{\partial q^i} +
\left(\frac{\partial H}{\partial p_i} \frac{\partial \gamma_j}{\partial q^i} +
\left(\gamma_i \frac{\partial H}{\partial p_i} -H\right) \frac{\partial \gamma_j}{\partial z}\right)  \frac{\partial}{\partial p_j}
+ \left(\gamma_i \frac{\partial H}{\partial p_i} - H\right) \frac{\partial}{\partial z}
\end{equation}

Therefore, from (\ref{1hvf}) and (\ref{congamma}), we have that
$$
X_H \circ \gamma = T\gamma (X_H ^\gamma)
$$
if and only if

\begin{equation}\label{hjlocal}
\frac{\partial H}{\partial q^j} + 
 \frac{\partial \gamma_j}{\partial q^i}  \frac{\partial H}{\partial p_i} + 
\gamma_j \frac{\partial H}{\partial z} + \gamma_i \frac{\partial \gamma_j}{\partial z} \frac{\partial H}{\partial p_i} - H \frac{\partial \gamma_j}{\partial z} = 0.
\end{equation}

Assume now that

\begin{enumerate}

\item $\gamma(Q\times \mathbb{R})$ is a coisotropic submanifold of 
$(T^{*}Q\times \mathbb{R}, \eta_Q)$;

\item $\gamma_z (Q)$ is a Lagrangian submanifold of $(T^{*}Q, \omega_Q)$, for any $z \in \mathbb{R}$,
where $\gamma_z (q) = \rho \circ \gamma (q, z)$.

Notice that the above two conditions imply that $\gamma(Q \times \mathbb{R})$ is foliated by
Lagrangian leaves $\gamma_z(Q)$, $z \in \mathbb{R}$.

\end{enumerate}

We will discuss the consequences of the above conditions.
The submanifold $\gamma(Q \times \mathbb{R})$ is locally defined by the functions
$$
\phi_i = p_i - \gamma_i = 0
$$
Therefore, the first condition is equivalent to

\begin{equation}\label{coiso}
\frac{\partial \gamma_i}{\partial q^j} - \gamma_j \frac{\partial \gamma_i}{\partial z} - \frac{\partial \gamma_j}{\partial q^i} + 
\gamma_i \frac{\partial \gamma_j}{\partial z} = 0
\end{equation}
If, in addition, $\gamma_z(Q)$ is Lagrangian submanifold for any fixed $z \in \mathbb{R}$, then we obtain
\begin{equation}\label{coiso2}
\frac{\partial \gamma_i}{\partial q^j} - \frac{\partial \gamma_j}{\partial q^i} = 0 
\end{equation}
and, using again (\ref{coiso}), we get
\begin{equation}\label{coiso3}
\gamma_j \frac{\partial \gamma_i}{\partial z} - 
\gamma_i \frac{\partial \gamma_j}{\partial z} = 0
\end{equation}

Under the above conditions (using \ref{coiso2} and \ref{coiso3}), \ref{hjlocal} becomes

\begin{equation}\label{hjlocal2}
\frac{\partial H}{\partial q^j} + 
\frac{\partial H}{\partial p_i} \frac{\partial \gamma_i}{\partial q^j} +
\gamma_j \left( \frac{\partial H}{\partial z} + \frac{\partial H}{\partial p_i} \frac{\partial \gamma_i}{\partial z} \right) - H \frac{\partial \gamma_j}{\partial z} = 0.
\end{equation}

We can write down eq (\ref{hjlocal2}) in a more friendly way. First of all, consider the following functions and 1-forms defined on
$Q \times \mathbb{R}$:

\begin{enumerate}

\item 
$$
\gamma_o =  \frac{\partial H}{\partial z} + \frac{\partial H}{\partial p_i} \frac{\partial \gamma_i}{\partial z} 
$$
\item 
$$
d(H \circ \gamma_z) = \left(\frac{\partial H}{\partial q^j} + 
\frac{\partial H}{\partial p_i} \frac{\partial \gamma_i}{\partial q^j} \right) dq^j
$$
\item
$$
i_{\frac{\partial}{\partial z}} (d(\gamma^* \theta_Q)) = \frac{\partial \gamma_j}{\partial z} dq^j
$$

\end{enumerate}

Therefore, eq (\ref{hjlocal2}) is equivalent to

\begin{equation}\label{hjglobal}
d (H \circ \gamma_z) + \gamma_o (\gamma^* \theta_Q) - (H\circ \gamma)  (i_{\frac{\partial}{\partial z}} (d(\gamma^* \theta_Q))) = 0.
\end{equation}

\begin{theorem}
 Assume that a section $\gamma$ of the projection $T^*Q \times \mathbb{R} \longrightarrow Q \times \mathbb{R}$
is such that $\gamma(Q\times \mathbb{R})$ is a coisotropic submanifold of 
$(T^{*}Q\times \mathbb{R}, \eta_Q)$, and $\gamma_z (Q)$ is a Lagrangian submanifold of $(T^{*}Q, \omega_Q)$, for any $z \in \mathbb{R}$. 
Then, the vector fields $X_H$ and $X_H^{\gamma}$ are $\gamma$-related if and only if (\ref{hjlocal2}) holds (equivalently, (\ref{hjglobal}) holds). 
\end{theorem}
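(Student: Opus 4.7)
The plan is to prove the theorem by direct comparison of local coordinate expressions; the preceding paragraphs of the excerpt essentially lay out every ingredient, so the proof amounts to assembling them. Recall that $X_H$ and $X_H^\gamma$ being $\gamma$-related means $X_H\circ\gamma = T\gamma\circ X_H^\gamma$, an identity of tangent vectors along the image of $\gamma$ which I would check in Darboux coordinates.

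First, I would compare the expression \eqref{1hvf} for $X_H$, evaluated along $\gamma$ (so with $p_i$ replaced by $\gamma_i(q,z)$), with the expression \eqref{congamma} for $T\gamma(X_H^\gamma)$. The $\partial/\partial q^i$ and $\partial/\partial z$ components match automatically, and equality of the $\partial/\partial p_j$ components is precisely equation \eqref{hjlocal}. Thus, with no hypothesis on $\gamma$, the vector fields are $\gamma$-related if and only if \eqref{hjlocal} holds. Next I would bring in the two geometric assumptions: applying the coisotropic criterion \eqref{za2} to the defining functions $\phi_i = p_i - \gamma_i$ of $\gamma(Q\times\mathbb{R})$ produces \eqref{coiso}; the Lagrangian condition on each leaf $\gamma_z(Q)$, equivalent by the remark after Proposition \ref{section_coisotropic} to $\gamma_z$ being a closed $1$-form on $Q$ for each fixed $z$, gives \eqref{coiso2}; and feeding \eqref{coiso2} back into \eqref{coiso} yields \eqref{coiso3}. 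Substituting \eqref{coiso2} and \eqref{coiso3} into the two mixed $\partial H/\partial p_i$ terms of \eqref{hjlocal} rewrites them exactly as they appear in \eqref{hjlocal2}. Every manipulation is a reversible algebraic rewrite under the hypotheses, so the converse direction follows immediately.

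Finally, the equivalence of \eqref{hjlocal2} with the intrinsic form \eqref{hjglobal} is obtained by expanding the three identifications displayed just above \eqref{hjglobal} (for $\gamma_o$, $d(H\circ\gamma_z)$ and $i_{\partial/\partial z}d(\gamma^*\theta_Q)$) in the basis $\{dq^j\}$ and matching coefficients. The main obstacle is purely bookkeeping: one must keep the two repeated indices in \eqref{hjlocal} distinct and observe that \eqref{coiso2} is needed to convert $\partial\gamma_j/\partial q^i$ into $\partial\gamma_i/\partial q^j$ inside the contraction with $\partial H/\partial p_i$, while \eqref{coiso3} converts $\sum_i\gamma_i\,(\partial\gamma_j/\partial z)\,\partial H/\partial p_i$ into $\gamma_j\sum_i(\partial\gamma_i/\partial z)\,\partial H/\partial p_i$ — the symmetry identity must be used under the sum over $i$, not termwise. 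No new geometric input beyond Proposition \ref{section_coisotropic} and the standard symplectic fact that a section of $T^*Q\to Q$ is Lagrangian iff it is closed is required.
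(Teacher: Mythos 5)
Your proposal is correct and takes essentially the same route as the paper: the paper's own argument is exactly the coordinate comparison of \eqref{1hvf} with \eqref{congamma} (only the $\partial/\partial p_j$ components giving the nontrivial condition \eqref{hjlocal}), followed by encoding the coisotropic hypothesis via $\phi_i=p_i-\gamma_i$ and \eqref{za2} as \eqref{coiso}, the Lagrangian hypothesis as \eqref{coiso2}, deducing \eqref{coiso3}, and substituting to reach \eqref{hjlocal2} and its intrinsic form \eqref{hjglobal}. The only difference is presentational: the paper carries out this derivation in the text preceding the theorem rather than in a proof environment, and your remark about applying \eqref{coiso3} under the sum is harmless since that identity in fact holds termwise for each pair $(i,j)$.
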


Equations (\ref{hjlocal2}) and (\ref{hjglobal}) are indistinctly referred as a {\it Hamilton--Jacobi equation with respect to a contact structure}. A section $\gamma$ fullfilling the assumptions of the theorem and the Hamilton-Jacobi equation will
be called a {\it solution} of the Hamilton--Jacobi problem for $H$.

\begin{remark}{\rm
Notice that if $\gamma$ is a solution of the Hamilton--Jacobi problem for $H$, then
$X_H$ is tangent to the coisotropic submanifold $\gamma(Q \times \mathbb{R})$, but
not necesarily to the Lagrangian submanifolds $\gamma_z(Q)$, $z \in \mathbb{R}$. This occurs when
$$
X_H(z -z_0) = 0
$$
for any $z_0$, that is, if and only if
$$
H \circ \gamma_{z_0} = \gamma_i \frac{\partial H}{\partial p_i}
$$
In such a case, we call $\gamma$ an {\it strong solution} of the Hamilton--Jacobi problem.
}
\end{remark}

A characterization of conditions on the submanifolds $\gamma(TQ \times \mathbb{R}), \gamma_z(TQ)$ can be given as follows. Let $\sigma: Q \times \mathbb{R} \to \Lambda^k(T^* Q)$ be a $z$-dependent $k$-form on $Q$. Let $d_Q \sigma$ be the exterior derivative at fixed $z$, that is
\begin{equation}
    d_Q \sigma(q^i,z) = d \sigma_z(q^i),
\end{equation}
where $\sigma_z= \sigma(\cdot,z)$. In local coordinates, we have
\begin{equation}
    \begin{split}
        d_Q f &= \frac{\partial f}{\partial  q^i} d q^i,\\
        d_Q (\alpha_i d q^i) &= \frac{\partial \alpha_j}{\partial  q^i} d q^i \wedge d q^j,
    \end{split}
\end{equation}
where $f:Q \times \mathbb{R} \to \mathbb{R}$ is a function and $\alpha = \alpha_i dq^i: Q \times \mathbb{R} \to \Lambda^1(T^* Q)$ is a $z$-dependent $1$-form.

\begin{theorem}\label{thm:coisotropic_lagrangian_section}
    Let $\gamma$ be a section of $T^*Q \times \mathbb{R}$ over $Q \times \mathbb{R}$. Then $\gamma(Q \times \mathbb{R})$ is a coisotropic submanifold and $\gamma_{z_0}(TQ)$ are Lagrangian submanifolds for all $z_0$ if and only if $d_Q \gamma = 0$ and $\lieD{\partial/\partial z} \gamma = \sigma \gamma$ for some function $\sigma:Q\times \mathbb{R} \to \mathbb{R}$. That is, there exists locally a function $f:Q \times \mathbb{R} \to \mathbb{R}$ such that $d_Q f = \gamma$ and $d_Q \frac{\partial f}{\partial z} = \sigma d_Q f$.
\end{theorem}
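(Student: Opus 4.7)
The plan is to reduce both geometric conditions to the coordinate equations already derived in the paragraphs preceding the theorem, and then translate them into intrinsic statements about $\gamma$ viewed as a $z$-dependent $1$-form on $Q$. Throughout I identify the section $\gamma(q,z)=(q^i,\gamma_i(q,z),z)$ with the $z$-dependent $1$-form $\gamma_i\, dq^i$.

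First I would observe that, by the remark following Proposition~\ref{section_coisotropic}, the condition that $\gamma_{z_0}(Q)\subset(T^*Q,\omega_Q)$ be Lagrangian for every $z_0$ is equivalent to $d\gamma_{z_0}=0$ for every $z_0$, which is exactly equation~(\ref{coiso2}). Since $d_Q(\gamma_i\, dq^i)=\tfrac{1}{2}\bigl(\partial\gamma_i/\partial q^j-\partial\gamma_j/\partial q^i\bigr)\, dq^j\wedge dq^i$, this is the same as $d_Q\gamma=0$. Under this assumption, the coisotropic condition~(\ref{coiso}) collapses, as already shown in the excerpt, to~(\ref{coiso3}): $\gamma_j\,\partial_z\gamma_i-\gamma_i\,\partial_z\gamma_j=0$. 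This expresses pointwise proportionality of the tuples $(\gamma_i)$ and $(\partial_z\gamma_i)$, so locally there is a smooth function $\sigma$ with $\partial_z\gamma_i=\sigma\gamma_i$. Because $\lieD{\partial/\partial z}(\gamma_i\,dq^i)=(\partial_z\gamma_i)\,dq^i$, this is precisely $\lieD{\partial/\partial z}\gamma=\sigma\gamma$.

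For the ``that is'' clause I would invoke the Poincar\'e lemma with smooth dependence on a parameter: since each $\gamma_z$ is closed on $Q$ and depends smoothly on $z$, there exists locally a smooth $f:Q\times\mathbb{R}\to\mathbb{R}$ with $d_Q f=\gamma$, i.e.\ $\partial f/\partial q^i=\gamma_i$. Exchanging partial derivatives gives $d_Q(\partial f/\partial z)=\partial_z\gamma_i\, dq^i$, while $\sigma\, d_Q f=\sigma\gamma_i\, dq^i$, so the relation $\partial_z\gamma_i=\sigma\gamma_i$ is equivalent to $d_Q(\partial f/\partial z)=\sigma\, d_Q f$. The reverse implication follows from the same computation run backwards, and by differentiation one recovers both~(\ref{coiso2}) and~(\ref{coiso3}) from the existence of $f$ and $\sigma$.

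The main obstacle I anticipate is the smoothness of $\sigma$ near points where $\gamma$ vanishes. On the open set where $\gamma\neq 0$, one may set $\sigma=\partial_z\gamma_{i_0}/\gamma_{i_0}$ for any index $i_0$ with $\gamma_{i_0}\neq 0$, and~(\ref{coiso3}) guarantees this definition is independent of the choice of $i_0$, hence smooth. Across the zero locus both sides of $\partial_z\gamma_i=\sigma\gamma_i$ vanish for any choice of $\sigma$, but producing a globally smooth extension may require an additional argument or a mild genericity hypothesis; accordingly the statement is best interpreted as a local characterization holding on the complement of the zero set of $\gamma$, with the ``$\sigma$'' in the conclusion existing in a neighborhood of each such point.
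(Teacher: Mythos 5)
Your proposal follows essentially the same route as the paper's own proof: Lagrangian leaves for all $z_0$ are equivalent to $d_Q\gamma=0$ via closedness, the coisotropic condition then collapses to \eqref{coiso3}, read intrinsically as $\gamma\wedge\lieD{\partial/\partial z}\gamma=0$ (proportionality of $\gamma$ and $\lieD{\partial/\partial z}\gamma$), and the Poincar\'e lemma produces the local $f$ with $d_Q f=\gamma$ and $d_Q\frac{\partial f}{\partial z}=\sigma\, d_Q f$. Your explicit caveat about the smoothness of $\sigma$ across the zero locus of $\gamma$ is a legitimate refinement of a point the paper's proof passes over silently, not a different argument.
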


\begin{proof}
    Fix $z_0 \in \mathbb{R}$, then, $\gamma_{z_0}(Q)$ is Lagrangian if and only if $\gamma_{z_0}$ is closed, hence $d \gamma_{z_0} =0$, so all $\gamma_{z_0}(Q)$ are Lagrangian if and only if $d_Q \gamma =  0$. By the Poincaré Lemma, locally $\gamma = d_Q f$,

    Now also assume that $\gamma(Q\times \mathbb{R})$ is coisotropic. Then, equation~\eqref{coiso3} can be written as
    \begin{equation}
        \gamma \wedge \lieD{\partial/\partial z}{\gamma} = 0,
    \end{equation} 
    or, equivalently, that $\gamma$ and $\lieD{\partial/\partial z}{\gamma}$ are proportional. 
    
    Locally, we obtain that $d_Q \frac{\partial f}{\partial z} = \sigma d_Q f$.

\end{proof}

\subsubsection{Complete solutions}

Next, we shall discuss the notion of complete solutions of the Hamilton--Jacobi problem for a Hamiltonian $H$.

\begin{definition}
 A {\it complete solution} of the Hamilton--Jacobi equation for a Hamiltonian $H$ 
 is a diffeomorphism $\Phi:Q\times \mathbb{R}\times \mathbb{R}^n\rightarrow T^{*}Q\times \mathbb{R}$ such that for any set of
 parameters $\lambda\in \mathbb{R}^n, \lambda=(\lambda_1,\dots,\lambda_n)$, the mapping
 
 \begin{equation}
 \begin{array}{ccc}
  \Phi_{\lambda}:Q\times \mathbb{R}& \rightarrow &  T^{*}Q\times \mathbb{R}  \\
  (q^i, z) &\mapsto &  \Phi_\lambda(q^i, z) = \Phi(q^i, z, \lambda)
 \end{array}
 \end{equation}
\noindent
is a solution of the Hamilton--Jacobi equation. If, in addition, any $\Phi_\lambda$ is strong,
then the complete solution is called an strong complete solution.
\end{definition}

We have the following diagram

\[
    \begin{tikzcd}
        Q \times \mathbb{R} \times \mathbb{R}^n \arrow[r, "\Phi", shift left] \arrow[d, "\alpha"] & T^*Q \times \mathbb{R} \arrow[l, "\Phi^{-1}", shift left] \arrow[d, "f_i"] \\
        \mathbb{R}^n \arrow[r, "\pi_i"]                                                           & \mathbb{R}                                                                
    \end{tikzcd}
\]
where we define functions $f_i$ such that for a point $p\in T^{*}Q\times \mathbb{R}$, it is satisfied
\begin{equation}\label{functions}
 f_i(p)=\pi_i\circ \alpha\circ \Phi^{-1}(p).
\end{equation}
and $\alpha:Q\times \mathbb{R}\times \mathbb{R}^n\rightarrow \mathbb{R}^n$ is the canonical projection.

The first immediate result is that
$$
\hbox{Im} \; \Phi_\lambda = \cap_{i=1}^n \, f_i^{-1}(\lambda_i)
$$
where $\lambda = (\lambda_1, \cdots, \lambda_n)$. In other words,
$$
\hbox{Im} \; \Phi_\lambda = \{ x \in T^*Q \times \mathbb{R} \; | \; f_i(x) = \lambda_i, i=1, \cdots, n\}
$$
Therefore, since $X_H$ is tangent to any of the submanifolds $\hbox{Im} \; \Phi_\lambda$, we deduce that
$$
X_H (f_i) = 0
$$
So, these functions are conserved quantities.

Moreover, we can compute
$$
\{f_i, f_j\} = \Lambda (df_i, df_j) - f_i \mathcal{R}(f_j) + f_j \mathcal{R}(f_i)
$$
But
$$
\Lambda (df_i, df_j) = \sharp_\Lambda( df_i)(f_j) = 0
$$
since $(T \hbox{Im} \Phi_\lambda)^\perp = \sharp_\Lambda ((T \hbox{Im} \Phi_\lambda)^o) \subset T \hbox{Im} \Phi_\lambda$, so 
\begin{equation}\label{involution}
\{f_i, f_j\} = - f_i \mathcal{R}(f_j) + f_j \mathcal{R}(f_i)
\end{equation}

\begin{theorem}
 There exist no linearly independent commuting set of first-integrals in involution \eqref{functions} for a complete strong solution of the Hamilton--Jacobi
 equation on a contact manifold.
\end{theorem}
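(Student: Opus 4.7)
The plan is to argue by contradiction. Assume the functions $f_1,\dots,f_n$ of~\eqref{functions} are linearly independent, first integrals of $X_H$, and mutually in involution. By identity~\eqref{involution}, $\{f_i,f_j\}=0$ is equivalent to
\[
f_i\,\Reeb(f_j)=f_j\,\Reeb(f_i),\quad i,j=1,\ldots,n.
\]
On any open set where some $f_{i_0}$ is non-vanishing, the ratios $\Reeb(f_i)/f_i$ must all coincide, so there exists a locally defined function $g$ with $\Reeb(f_i)=g\,f_i$ for every $i$; in particular $\Reeb(f_i)$ vanishes wherever $f_i$ does.

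Next I would exploit that $\Phi$ is a \emph{strong} complete solution. This forces $X_H$ to be tangent to every $n$-dimensional slice $\Phi_\lambda(Q\times\{z_0\})$, so $X_H(z-z_0)=0$ for every $z_0$, and hence $X_H(z)=0$ on all of $T^*Q\times\mathbb{R}$. Combined with the coordinate expression $X_H(z)=p_i\,\partial H/\partial p_i-H$, this yields $p_i\,\partial H/\partial p_i=H$ throughout phase space, so $H$ must be $1$-homogeneous---and by smoothness linear---in the momenta, $H=\alpha_i(q,z)\,p_i$. In particular $z$ itself becomes an additional first integral of $X_H$, functionally independent of $f_1,\ldots,f_n$.

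The last step is to push everything through $\Phi^{-1}$ to coordinates $(q^i,z,\lambda_i)$ on $Q\times\mathbb{R}\times\mathbb{R}^n$. There $f_i=\lambda_i$; the property $X_H(f_i)=0$ says $\Phi^*X_H$ has no $\lambda$-components; and $\Reeb(f_i)=g\,f_i$ forces the $\lambda$-components of $\Phi^*\Reeb$ to take the proportional form $g\,\lambda_i$. The main obstacle is exactly converting these structural constraints into a genuine counting contradiction, since a pointwise Reeb-eigenfunction condition is in general compatible with functional independence. I expect the cleanest route is to work in explicit Darboux coordinates, where the joint system $X_H(f_i)=0$, $\Reeb(f_i)=g\,f_i$, $H=\alpha_i(q,z)\,p_i$, and $X_H(z)=0$ becomes an overdetermined linear PDE system whose only simultaneous solutions are collections of functionally dependent $f_i$, contradicting the hypothesised linear independence and therefore the existence of such a complete strong solution.
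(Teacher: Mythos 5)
Your proposal stalls precisely at the step where the proof has to happen, and you say so yourself. The preparatory observations are sound: the reduction of involution via \eqref{involution} to a pointwise proportionality $\mathcal{R}(f_i)=g\,f_i$ on any open set where some $f_{i_0}$ does not vanish is correct, and your remark that completeness globalizes strongness --- the slices $\Phi_\lambda(Q\times\{z_0\})$ fill all of $T^*Q\times\mathbb{R}$, so $X_H(z)=0$ everywhere, whence $p_i\,\partial H/\partial p_i = H$ and $H$ is fiberwise linear in the momenta --- is a legitimate (and rather nice) consequence that the paper does not use. But from there the conclusion is never derived. You correctly note that a pointwise eigenfunction condition $\mathcal{R}(f_i)=g\,f_i$ is compatible with functional independence, and you then substitute for the missing contradiction the \emph{expectation} that an ``overdetermined linear PDE system'' in Darboux coordinates admits only functionally dependent solutions, without writing that system down, let alone analyzing it. That is a conjecture, not a proof. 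Nor is a counting contradiction available from what you have established: your extra first integral $z$ gives $n+1$ independent first integrals of a single vector field on a $(2n+1)$-dimensional manifold, which is entirely unproblematic; and fiberwise linearity of $H$ by itself excludes nothing.

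The paper closes the argument by a mechanism absent from your sketch: it first uses strongness, together with $\Phi$ being a diffeomorphism, to argue that the Reeb vector field $\mathcal{R}$ cannot be tangent to the coisotropic leaves $\Phi_\lambda(Q\times\mathbb{R})$ --- tangency would force $\partial(\Phi_\lambda)_i/\partial z=0$, i.e.\ leaves independent of $z$ --- so at each point of a leaf some $\mathcal{R}(f_{j})$ is nonzero. It then confronts this non-tangency with the involution identity $f_i\,\mathcal{R}(f_j)=f_j\,\mathcal{R}(f_i)$ on the distinguished leaf $\lambda=(0,\dots,0)$, where all the $f_i$ vanish simultaneously; there (in your language, since $\mathcal{R}(f_i)$ must vanish wherever the $f_i$ do) the identity is incompatible with the $f_i$ being linearly independent. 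The decisive ingredients you are missing are exactly these two: the transversality of $\mathcal{R}$ to the leaves extracted from the strong-plus-complete hypothesis, and the evaluation of the involution relations at the zero leaf. Without them, the route you propose --- global linearity of $H$ in $p$ plus a hoped-for rigidity of a PDE system --- would not go through as stated, since Hamiltonians linear in the momenta do admit complete strong data, so no contradiction can emerge from that information alone.
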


{\bf Proof:} If all the particular solutions are strong, then the Reeb vector field $\mathcal{R}$ will be
transverse to the coisotropic submanifold $\Phi_\lambda(Q \times \mathbb{R})$. Indeed, 
if $\mathcal{R}$ is tangent to that submanifold, we would have
$$
\mathcal{R} (p_i - (\Phi_\lambda)_i) = - \frac{\partial (\Phi_\lambda)_i}{\partial z}
$$
where $\Phi_\lambda (q^i, z) = (q^i, (\Phi_\lambda)_i, z)$. So, $\Phi_\lambda$ does not depend on $z$, hence it cannot be a diffeomorphism.

Therefore, if the brackets $\{f_i, f_j\} $ vanish, then we woul obtain
that the functions $f_i$ cannot be linearly independent. Indeed, we should have
$$
f_i \mathcal{R}(f_j) =  f_j \mathcal{R}(f_i)
$$
for all $i, j$. But this would imply that $f_i$ and $f_j$ are linearly dependent in the case
$\lambda = (0, \dots, 0)$. 

$\hfill \Box$

%Otherwise, we can change the given $f_i$'s functions by these ones
% $\bar{f}_ i = f_i - \lambda_i$, and the result holds for $\{\bar{f}_1, \dots, \bar{f}_n\}$.

%%%%%%%%%%%%%%%%%%%%%%%%%%

\subsubsection{An alternative approach}

Instead of considering sections of $\pi : T^*Q \times \mathbb R \longrightarrow Q \times \mathbb{R}$ as above, 
we could consider a section of the canonical projection 
$\pi : T^*Q \times \mathbb R \longrightarrow Q$, say
$\gamma :  Q \to T^*Q \times \mathbb R$.

In local coordinates, we have
$$
(q^i) \mapsto \gamma(q^i) = (q^i, \gamma_j(q^i), \gamma_z(q^i))
$$

We want $\gamma$ to fulfill
\begin{equation}\label{eq:HJ}
X_H\circ\gamma=T\gamma \circ X_H^\gamma,
\end{equation}
where $X_H^\gamma=T\pi\circ X_H\circ\gamma$. Using the local expression of $X_H$ we have $X_H^\gamma=\sum_{i=1}^n
\left(\frac{\partial H}{\partial p_i}\circ\gamma\right) \frac{\partial}{\partial q^i}$, and since

$$
T\gamma \left(\fracpartial{}{q^i}\right)=\fracpartial{}{qi}+\sum_{j=1}^n\fracpartial{\gamma_j}{q_i}\fracpartial {}{p_j}+\fracpartial{\gamma_z}{q_i}\fracpartial{}{p_j}
$$
equation (\ref{eq:HJ}) holds if and only if:
\begin{eqnarray}\label{eq:HJ2}
%\sum_{i=1}^n\left[\fracpartial {H}{ p_i}\sum_{j=1}^n\frac{\partial \gamma_j}{\partial q_i} \frac{\partial}{\partial p_j}+\left(\gamma_i\frac{\partial H}{\partial z}+\frac{\partial H}{\partial q_i}\right)\frac{\partial}{\partial p_i}\right]=0,
%
&& -\left(\gamma_i\fracpartial {H}{z}+\fracpartial{H}{q_i}\right)=\sum_{j=1}^n\fracpartial {H}{p_j}\fracpartial{\gamma_i}{q^j},\quad i=1,\dots,n, \\
%\left[H-\sum_{i=1}^n\left(\gamma_i\fracpartial {H}{p_i}-\fracpartial{ H}{p_i}\right)\right]\fracpartial{}{z}=0.
%
&& \sum_{i=1}^n\gamma_i\fracpartial{H}{p_i}-H=\sum_{i=1}^n\fracpartial {H}{p_i}\fracpartial{\gamma_z}{q^i}.
\end{eqnarray}

Now, notice that
$$
\tilde{\gamma} = \rho \circ \gamma
$$
is a 1-form on $Q$. Then, we locally have $\tilde{\gamma} = \gamma_i(q) \, dq^i$.

Next, we assume that $\gamma(Q)$ is a Legendrian submanifold of $(T^*Q \times \mathbb{R}, \eta_Q)$. This implies that $\tilde{\gamma}(Q)$ is a Lagrangian submanifold of $(T^*Q, \omega_Q)$.

By Proposition~\ref{section_coisotropic}, $\gamma(Q)$ is a Legendrian submanifold if and only if it is locally the 1-jet of a function, namely $\gamma=j^1 \gamma_z$, where we consider $\gamma_z$ as a function from $Q$ to $\mathbb R$. In other words, we have:
\begin{equation}\label{cond1}
\gamma_i = \frac{\partial \gamma_z}{\partial q^i}
\end{equation}

If we assume that the section $\gamma$ fulfills the above condition, we can see that equations (\ref{eq:HJ2}) become
\begin{eqnarray}
H \circ \gamma = 0. \label{eq:HJ3}
\end{eqnarray}

\begin{definition}
Assume that a section $\gamma$ such that $\gamma(Q)$ is a Legendrian submanifold of $(T^*Q \times \mathbb{R}, \eta_Q)$
and $\tilde{\gamma}(Q)$ is a Lagrangian submanifold of $(T^*Q, \omega_Q)$. Then $\gamma$ is called a solution of the Hamilton-Jacobi problem for the contact Hamiltonian $H$ if and if equation (\ref{eq:HJ3}) holds.
\end{definition}

We could discuss the existence of complete solutions in a similar manner to the case of the Hamiltonian vector field. We omit the details that are left to the reader.

\subsection{The Hamilton-Jacobi equations for the evolution vector field}

\subsubsection{A first approach}

Assume that $\mathcal{E}_H$ is the evolution vector field defined for a Hamiltonian function
$H: T^*Q \times \mathbb{R} \longrightarrow \mathbb{R}$. Then, we have

\begin{equation}\label{evol}
\mathcal{E}_H = \frac{\partial H}{\partial p_i} \frac{\partial}{\partial q^i} - 
\left(\frac{\partial H}{\partial q^i} + p_i \frac{\partial H}{\partial z} \right)\,  \frac{\partial}{\partial p_i} + 
p_i \frac{\partial H}{\partial p_i} \, \frac{\partial}{\partial z}
\end{equation}

Assume that $\gamma$ is a section of the canonical projection 
$\pi : T^*Q \times \mathbb R \longrightarrow Q \times \mathbb{R}$, say
$\gamma :  Q \times \mathbb{R} \to T^*Q \times \mathbb R$.

In local coordinates we have
$$
(q^i, z) \mapsto \gamma(q^i) = (q^i, \gamma_j(q^i), z)
$$

Therefore, we can define the projected evolution vector field
$$
\mathcal{E}_H^\gamma = T\pi \circ \mathcal{E}_H \circ \gamma.
$$

We have that $\mathcal{E}_H \circ \gamma = T\gamma(\mathcal{E}_H^\gamma)$ if and only if
\begin{equation}\label{nuevo}
\frac{\partial H}{\partial q^j} + \frac{\partial H}{\partial p_i} \frac{\partial \gamma_j}{\partial q^i}
+ \gamma_i  \frac{\partial H}{\partial p_i}  \frac{\partial \gamma_j}{\partial z} + \gamma_j \frac{\partial H}{\partial z} = 0
\end{equation}

Assume now that

\begin{enumerate}

\item $\gamma(Q\times \mathbb{R})$ is a coisotropic submanifold of 
$(T^{*}Q\times \mathbb{R}, \eta_Q)$;

\item $\gamma_z (Q)$ is a Legendrian submanifold of $(T^{*}Q\times \mathbb{R}, \eta_Q)$, for any $z \in \mathbb{R}$,
where $\gamma_z (q) = \gamma (q, z)$.

\end{enumerate}

Then, a direct computation shows that (\ref{nuevo}) becomes
\begin{equation}\label{nuevo2}
d( H \circ \gamma) + \gamma_o \, \gamma^*(\theta_Q) = 0 \; ,
\end{equation}
where 
$$
\gamma_o =  \frac{\partial H}{\partial z} + \frac{\partial H}{\partial p_i} \frac{\partial \gamma_i}{\partial z} 
$$

\begin{theorem}
 Assume that a section $\gamma$ of the projection $T^*Q \times \mathbb{R} \longrightarrow Q \times \mathbb{R}$
is such that $\gamma(Q\times \mathbb{R})$ is a coisotropic submanifold of 
$(T^{*}Q\times \mathbb{R}, \eta_Q)$, and $\gamma_z (Q)$ is a Legendrian submanifold of $(T^{*}Q\times \mathbb{R}, \eta_Q)$, for any $z \in \mathbb{R}$. 
Then, the vector fields $\mathcal{E}_H$ and $\mathcal{E}_H^{\gamma}$ are $\gamma$-related if and only if (\ref{nuevo2}) holds. 
\end{theorem}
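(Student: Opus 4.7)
The plan parallels that of the preceding theorem for the Hamiltonian vector field. First I would establish equation (\ref{nuevo}) as the local characterization of $\gamma$-relatedness between $\mathcal{E}_H$ and $\mathcal{E}_H^\gamma$, and then use the coisotropic and Legendrian hypotheses to rewrite it in the invariant form (\ref{nuevo2}).

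For the first step I would compute $\mathcal{E}_H\circ\gamma$ and $T\gamma\circ\mathcal{E}_H^\gamma$ in Darboux coordinates using the local expression (\ref{evol}). Writing $\gamma(q^i,z)=(q^i,\gamma_j(q^i,z),z)$, the projected field reads $\mathcal{E}_H^\gamma = (\partial H/\partial p_i)(\partial/\partial q^i) + \gamma_i(\partial H/\partial p_i)(\partial/\partial z)$, and its push-forward is obtained via $T\gamma(\partial/\partial q^i)=\partial/\partial q^i + (\partial\gamma_j/\partial q^i)\,\partial/\partial p_j$ and $T\gamma(\partial/\partial z)=\partial/\partial z + (\partial\gamma_j/\partial z)\,\partial/\partial p_j$. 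The $\partial/\partial q^i$ and $\partial/\partial z$ components of $\mathcal{E}_H\circ\gamma$ and $T\gamma(\mathcal{E}_H^\gamma)$ match automatically because $\mathcal{E}_H^\gamma = T\pi\circ\mathcal{E}_H\circ\gamma$, so the only nontrivial requirement is equality of the $\partial/\partial p_j$ components, which is exactly (\ref{nuevo}).

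For the second step I would bring in the two geometric hypotheses. Coisotropy of $\gamma(Q\times\mathbb{R})$ applied to the defining functions $\phi_i = p_i-\gamma_i$ via (\ref{za})-(\ref{za2}) yields (\ref{coiso}). The Legendrian hypothesis on each slice $\gamma_z(Q)$ supplies an additional constraint which, combined with (\ref{coiso}), produces the evolution analogues of (\ref{coiso2}) and (\ref{coiso3}): the fiberwise symmetry $\partial\gamma_i/\partial q^j=\partial\gamma_j/\partial q^i$ and the collinearity $\gamma_i\,\partial\gamma_j/\partial z = \gamma_j\,\partial\gamma_i/\partial z$. Substituting into (\ref{nuevo}), the first two terms assemble into $\partial(H\circ\gamma)/\partial q^j$, while the last two terms, after relabelling indices using the collinearity relation, combine into $\gamma_j\gamma_o$. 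Since $\gamma^*\theta_Q = \gamma_j\,dq^j$, this is precisely the $dq^j$-component of (\ref{nuevo2}).

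The main obstacle is reconciling the $dz$-component. Equation (\ref{nuevo}) provides only $n$ scalar relations indexed by $j$, while the 1-form identity (\ref{nuevo2}) prescribes $n+1$ components on $Q\times\mathbb{R}$; the $dz$-component demands $\partial(H\circ\gamma)/\partial z = 0$, i.e.\ the vanishing of $\gamma_o$. The Legendrian hypothesis on the slices must be leveraged carefully to supply this missing scalar, either by forcing $\gamma_o$ to vanish along $\gamma(Q\times\mathbb{R})$ or by showing that $d(H\circ\gamma) + \gamma_o\,\gamma^*\theta_Q$ has no intrinsic $dz$-part once both hypotheses are imposed. Pinning down this delicate compatibility is where the bulk of the technical effort is concentrated.
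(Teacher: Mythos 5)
Your first two steps reproduce exactly what the paper's (unwritten) ``direct computation'' does: matching the $\partial/\partial p_j$ components of $\mathcal{E}_H\circ\gamma$ and $T\gamma(\mathcal{E}_H^\gamma)$ gives (\ref{nuevo}) (the $q$- and $z$-components match automatically, as you say), and then coisotropy of the image via $\phi_i=p_i-\gamma_i$ gives (\ref{coiso}), the slice condition gives the symmetry (\ref{coiso2}), and the two together give the collinearity (\ref{coiso3}); substituting these into (\ref{nuevo}) produces $\partial(H\circ\gamma)/\partial q^j + \gamma_o\gamma_j = 0$, the $dq^j$-component of (\ref{nuevo2}). One remark: you quietly read the ``Legendrian slices'' hypothesis as the Lagrangian/closedness condition of the preceding subsection, which is indeed the only reading under which the computation is nontrivial (a literal Legendrian $\gamma_z(Q)\subset(T^*Q\times\mathbb{R},\eta_Q)$ with $z$ frozen would force $\eta_Q$ to kill all tangent vectors, i.e.\ $\gamma_i\equiv 0$). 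That matches the authors' intent.

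The genuine gap is that you stop before proving the theorem: your last paragraph names the $dz$-component as ``the main obstacle'' and leaves it open. The resolution is interpretive rather than technical, and one of the two escape routes you propose is provably a dead end. In (\ref{nuevo2}) the differential must be read as the fixed-$z$ (base) differential $d_Q$, exactly as in the Hamiltonian case, where the corresponding equation (\ref{hjglobal}) is written with $d(H\circ\gamma_z)$ and not $d(H\circ\gamma)$; with that reading (\ref{nuevo2}) has only $dq^j$-components, it is verbatim the rewritten form of (\ref{nuevo}), and your computation already finishes the proof with nothing left over. By contrast, your hope that the hypotheses might ``force $\gamma_o$ to vanish along $\gamma(Q\times\mathbb{R})$'' cannot be realized: take $H=H(z)$ nonconstant and $\gamma\equiv 0$. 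Then the image is coisotropic, every slice is Lagrangian, (\ref{nuevo}) holds identically (so $\mathcal{E}_H$ and $\mathcal{E}_H^\gamma$ are $\gamma$-related), yet $\gamma_o=\partial H/\partial z\neq 0$ and the full differential satisfies $d(H\circ\gamma)=H'(z)\,dz\neq 0$. So with the total differential on $Q\times\mathbb{R}$ the ``if and only if'' is simply false — $\gamma$-relatedness yields only the $n$ relations (\ref{nuevo}) and can never control $\partial(H\circ\gamma)/\partial z$ — confirming your suspicion, and the missing idea was to identify $d$ in (\ref{nuevo2}) with $d_Q$ rather than to hunt for the extra scalar equation.
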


Equation (\ref{nuevo2}) is referred as a {\it Hamilton--Jacobi equation for the evolution vector field}.
A section $\gamma$ fullfilling the assumptions of the theorem and the Hamilton-Jacobi equation will
be called a {\it solution} of the Hamilton--Jacobi problem for the evolution vector field of $H$.

\subsubsection{An alternative approach}\label{sec:evolution_alt}

We will maintain the notations of the previous subsection, but now $\gamma$ is a section
of the canonical projection 
$\pi : T^*Q \times \mathbb R \longrightarrow Q$, say
$\gamma :  Q \to T^*Q \times \mathbb R$.

In local coordinates we have
$$
(q^i) \mapsto \gamma(q^i) = (q^i, \gamma_j(q^i), \gamma_z(q^i))
$$

As in the above sections, we define the projected evolution vector field
$$
\mathcal{E}_H^\gamma = T\pi \circ \mathcal{E}_H \circ \gamma.
$$

A direct computation shows that
$\mathcal{E}_H \circ \gamma = T\gamma(\mathcal{E}_H^\gamma)$ if and only if
\begin{eqnarray}
&&\frac{\partial H}{\partial q^j} + \frac{\partial H}{\partial p_i} \frac{\partial \gamma_j}{\partial q^i} \label{11a}
+ \gamma_j \frac{\partial H}{\partial z} = 0  \label{11b} \\
 &&  \frac{\partial H}{\partial p_i} \left(\frac{\partial \gamma_z}{\partial q^i} - \gamma_i \right) = 0 \label{12}
\end{eqnarray}

If we assume that $\gamma = j^1 f$, for some function $f : Q \longrightarrow \mathbb{R}$ (or, equivalently,
$\gamma(Q)$ is a Legendrian submanifold of $(T^*Q \times \mathbb{R}, \eta_Q)$), then
$$
\gamma_i  = \frac{\partial \gamma_z}{\partial q^i}
$$ 
and so~\eqref{11b} is fulfilled and~\eqref{11a} becomes
\begin{equation}\label{nuevo4}
d (H \circ \gamma) = 0.
\end{equation}

\begin{remark}
{\rm Notice that f and $\gamma_z$ define (locally) the same 1-jet.}
\end{remark}

Therefore, we have the following.

\begin{theorem}
 Assume that a section $\gamma$ of the projection $T^*Q \times \mathbb{R} \to Q$
is such that $\gamma(Q)$ is a Legendrian submanifold of $(T^{*}Q\times \mathbb{R}, \eta_Q)$. 
Then, the vector fields $\mathcal{E}_H$ and $\mathcal{E}_H^{\gamma}$ are $\gamma$-related if and only if (\ref{nuevo4}) holds. 
\end{theorem}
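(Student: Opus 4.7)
The plan is to reduce the statement to the direct-computation identities \eqref{11a} and \eqref{12} that were already established in the preamble, and then show that under the Legendrian hypothesis these two conditions collapse to the single equation \eqref{nuevo4}. So first I would treat \eqref{11a}--\eqref{12} as given: the relatedness condition $\mathcal{E}_H \circ \gamma = T\gamma \circ \mathcal{E}_H^{\gamma}$ in Darboux coordinates splits into a ``horizontal'' part (one equation for each $j$, eq.~\eqref{11a}) and a ``$z$-component'' part (eq.~\eqref{12}), after matching coefficients of $\partial/\partial q^i,\ \partial/\partial p_j,\ \partial/\partial z$ in \eqref{evol} with the push-forward $T\gamma(\mathcal{E}_H^{\gamma})$.

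Next I would invoke Proposition~\ref{section_coisotropic} to translate the Legendrian hypothesis into the concrete coordinate statement $\gamma = j^1 \gamma_z$, equivalently
$$
\gamma_i(q) = \frac{\partial \gamma_z}{\partial q^i}(q), \qquad i=1,\dots,n.
$$
Substituting this into \eqref{12} makes each factor $\partial \gamma_z/\partial q^i - \gamma_i$ vanish identically, so the equation holds automatically regardless of $H$. Moreover, the symmetry of mixed partials of $\gamma_z$ gives $\partial \gamma_j/\partial q^i = \partial \gamma_i/\partial q^j$, so \eqref{11a} can be rewritten as
$$
\frac{\partial H}{\partial q^j} + \frac{\partial H}{\partial p_i}\frac{\partial \gamma_i}{\partial q^j} + \gamma_j\, \frac{\partial H}{\partial z} = 0, \qquad j=1,\dots,n.
$$

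Finally I would compute $d(H \circ \gamma)$ directly on $Q$. Since $\gamma(q)=(q^i,\gamma_i(q),\gamma_z(q))$, the chain rule yields
$$
\frac{\partial (H \circ \gamma)}{\partial q^j} = \frac{\partial H}{\partial q^j} + \frac{\partial H}{\partial p_i}\frac{\partial \gamma_i}{\partial q^j} + \frac{\partial H}{\partial z}\frac{\partial \gamma_z}{\partial q^j}.
$$
Using $\partial \gamma_z/\partial q^j = \gamma_j$ from the Legendrian condition, the right-hand side coincides with the reduced form of \eqref{11a}. Hence \eqref{11a} for all $j$ is equivalent to the vanishing of every partial derivative of the scalar function $H\circ\gamma : Q \to \mathbb{R}$, that is, to \eqref{nuevo4}. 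Combined with the automatic satisfaction of \eqref{12}, this establishes the equivalence.

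The proof is essentially bookkeeping; the only point requiring care is the index swap $\partial \gamma_j/\partial q^i \leftrightarrow \partial \gamma_i/\partial q^j$, which is where the Legendrian (hence closed-one-form) hypothesis is crucially used. I do not expect any genuine obstacle beyond keeping the coordinate computation tidy.
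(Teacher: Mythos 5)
Your proposal is correct and follows essentially the same route as the paper: reduce $\gamma$-relatedness to the coordinate equations \eqref{11b} and \eqref{12}, use the Legendrian condition $\gamma_i = \partial\gamma_z/\partial q^i$ to discharge \eqref{12}, and identify the remaining equation with $d(H\circ\gamma)=0$. Your explicit justification of the index swap $\partial\gamma_j/\partial q^i = \partial\gamma_i/\partial q^j$ via symmetry of second partials of $\gamma_z$ is a detail the paper leaves implicit, but the argument is the same.
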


Equation (\ref{nuevo4}) is referred as a {\it Hamilton--Jacobi equation for the evolution vector field}.
A section $\gamma$ fullfilling the assumptions of the theorem and the Hamilton-Jacobi equation will
be called a {\it solution} of the Hamilton--Jacobi problem for the evolution vector field of $H$.

%%%%%%%%%%%%%%%%%%%%%%%%%

\subsubsection{Complete solutions}

As in the case of the Hamiltonian vector field, we can consider complete solutions
for the evolution vector field.

\begin{definition}
 A {\it complete solution} of the Hamilton--Jacobi equation for the evolution vector field
$\mathcal{E}_H$ of a Hamiltonian $H$ on a contact manifold $(M,\eta)$ 
 is a diffeomorphism $\Phi:Q\times \mathbb{R}\times \mathbb{R}^n\rightarrow T^{*}Q\times \mathbb{R}$ such that for any set of
 parameters $\lambda= (\lambda_0, \lambda_1, \dots , \lambda_n) \in \mathbb{R} \times \mathbb{R}^n$, the mapping
 
 \begin{equation}
 \begin{array}{ccc}
  \Phi_{\lambda}:Q & \rightarrow &  T^{*}Q\times \mathbb{R}  \\
  (q^i) &\mapsto &  \Phi_\lambda(q^i) = \Phi(q^i, \lambda_0, \lambda_1, \dots, \lambda_n)
 \end{array}
 \end{equation}
\noindent
is a solution of the Hamilton--Jacobi equation. 
\end{definition}

For simplicity, we will use the notation $(\lambda_\alpha \; , \; \alpha= 0, 1, \dots, n)$.

As in the previous case,
we define functions $f_\alpha$ such that for a point $p\in T^{*}Q\times \mathbb{R}$, it is satisfied
\begin{equation}\label{functions}
 f_\alpha(p) = \pi_\alpha  \circ \Phi^{-1}(p).
\end{equation}
where $\pi_\alpha: Q\times \mathbb{R}\times \mathbb{R}^n\rightarrow \mathbb{R}$ is the canonical projection
onto the $\alpha$ factor.

A direct computation shows that

$$
\hbox{Im} \; \Phi_\lambda = \cap_{\alpha=0}^n \, f_\alpha^{-1}(\lambda_\alpha)
$$
In other words,
$$
\hbox{Im} \; \Phi_\lambda = \{ x \in T^*Q \times \mathbb{R} \; | \; f_\alpha(x) = \lambda_\alpha, \alpha=0, \cdots, n\}
$$
Therefore, since under our hypthesis, $\mathcal{E}_H$ is tangent to any of the submanifolds
$\hbox{Im} \; \Phi_\lambda$, we deduce that
$$
\mathcal{E}_H (f_\alpha) = 0
$$
So, these functions are conserved quantities for the evolution vector field.

Moreover, we can compute
$$
\{f_\alpha, f_\beta\} = \Lambda (df_\alpha, df_\beta) - f_\alpha \mathcal{R}(f_\beta) + f_\beta \mathcal{R}(f_\alpha)
$$
But
$$
\Lambda (df_\alpha, df_\beta) = \sharp_\Lambda( df_\alpha)(f_\beta) = 0
$$
since $(T \hbox{Im} \Phi_\lambda)^\perp = T \hbox{Im} \Phi_\lambda$, so 
\begin{equation}\label{involution}
\{f_\alpha, f_\beta\} = - f_\alpha \mathcal{R}(f_\beta) + f_\beta \mathcal{R}(f_\alpha)
\end{equation}

\begin{theorem}
 There exist no linearly independent commuting set of first-integrals in involution \eqref{functions} for a complete solution of the Hamilton--Jacobi equation for the evolution vector field. 
\end{theorem}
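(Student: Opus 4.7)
The plan is to argue by contradiction, paralleling the Hamiltonian-vector-field case treated earlier. Suppose the $n+1$ first integrals $f_0, f_1, \ldots, f_n$ arising from a complete solution are simultaneously linearly independent (their differentials span an $(n+1)$-dimensional subspace at each point) and pairwise in involution, $\{f_\alpha, f_\beta\}=0$ for all $\alpha,\beta$. By the computation~\eqref{involution}, which has already exploited the Legendrian identity $(T\,\mathrm{Im}\,\Phi_\lambda)^{\perp_\Lambda} = T\,\mathrm{Im}\,\Phi_\lambda$, this involution hypothesis is equivalent to the pointwise identity
\begin{equation*}
f_\alpha\,\mathcal{R}(f_\beta) \;=\; f_\beta\,\mathcal{R}(f_\alpha)\qquad\text{on }T^*Q\times\mathbb{R}.
\end{equation*}

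The next step is to extract information from this identity at a point $p$ of the distinguished leaf $\mathrm{Im}\,\Phi_0 = \bigcap_\alpha f_\alpha^{-1}(0)$; there the identity is tautologically $0=0$, so I apply the exterior derivative and evaluate at $p$. Because $f_\alpha(p)=0$ for every $\alpha$, the Leibniz expansion kills the terms $f_\alpha\,d(\mathcal{R}(f_\beta))$ and $f_\beta\,d(\mathcal{R}(f_\alpha))$, leaving
\begin{equation*}
\mathcal{R}(f_\beta)(p)\,df_\alpha|_p \;=\; \mathcal{R}(f_\alpha)(p)\,df_\beta|_p\qquad\text{for all }\alpha,\beta.
\end{equation*}
The goal is now to exhibit a single index $\gamma$ with $\mathcal{R}(f_\gamma)(p)\neq 0$, so that the displayed relation forces every $df_\alpha|_p$ to be a scalar multiple of $df_\gamma|_p$.

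Such a $\gamma$ must exist because $\mathrm{Im}\,\Phi_0$ is Legendrian: $T_p\,\mathrm{Im}\,\Phi_0\subset\ker\eta_p$ together with $\eta(\mathcal{R})=1$ implies $\mathcal{R}(p)\notin T_p\,\mathrm{Im}\,\Phi_0 = \bigcap_\alpha\ker df_\alpha|_p$, so $df_\gamma|_p(\mathcal{R}(p))=\mathcal{R}(f_\gamma)(p)\neq 0$ for at least one $\gamma$. The identity above then makes $\mathrm{span}\{df_\alpha|_p\}_{\alpha=0}^{n}$ at most one-dimensional, contradicting the assumed linear independence of the $n+1$ covectors $df_\alpha|_p$. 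I expect the only real subtlety to be the Reeb transversality step, but—unlike the previous theorem where a ``strong-solution'' hypothesis had to be invoked—here it is a free consequence of the Legendrian property of $\mathrm{Im}\,\Phi_0$ combined with $\eta(\mathcal{R})=1$, and the rest is a direct Leibniz computation.
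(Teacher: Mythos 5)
Your proof is correct and takes essentially the same route as the paper: the Legendrian property of the leaves $\mathrm{Im}\,\Phi_\lambda \subset \ker\eta_Q$ forces the Reeb field to be transverse to them, yielding an index with $\mathcal{R}(f_{\alpha_0})\neq 0$, which together with the identity \eqref{involution} rules out independence. Your Leibniz-rule differentiation of $f_\alpha\,\mathcal{R}(f_\beta)=f_\beta\,\mathcal{R}(f_\alpha)$ at a point of the zero leaf is simply a rigorous rendering of the final dependence step, which the paper asserts without detail (there, as in your version, by working at $\lambda=0$).
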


{\bf Proof:} Since the images of the sections are Legendrian then they are
integral submanifolds of $\ker \, \eta_Q$. So, the Reeb vector field $\mathcal{R}$ will be
transverse to them, and consequently, there is at least some index $\alpha_0$ such that
$$
\mathcal{R} (f_{\alpha_0}) \not= 0
$$

Therefore, if all the brackets $\{f_\alpha, f_\beta\} $ vanish, then we woul obtain
that the functions $f_\alpha$ cannot be linearly independent. 

$\hfill \Box$
%%%%%%%%%%%%%%%%%%%%%%%%%%%%%%%

\section{Symplectification of the Hamilton-Jacobi equation}

\subsection{Homogeneous Hamiltonian systems and contact systems}\label{sec:homogenization}
There is a close relationship between homogeneous symplectic and contact systems, see for example~\cite{Iba,vds-2018}. Here we briefly recall some facts about the symplectification of cotangent bundles.

For any manifold $M$ a function $F:T^* M\to\mathbb{R}$ is said to be \textit{homogeneous} if, for any $p_q\in T^*_p M$, we have $F(\lambda p_q)=\lambda F(p_q)$ for any $\lambda \in \mathbb{R}$. In this situation the function $F$ can be projected to the projective bundle $\mathcal{P}(T^*M)$ over $M$ obtained by projectivization of every cotangent space. We are interested in the case that $M = Q\times\mathbb{R}$, with natural coordinates $(q^i,z,P_i,P_z)$ on $T^*(Q \times \mathbb{R})$. We note that this definition can be generalized to any vector bundle.

Let $\tilde{H}$ be an homogeneous Hamiltonian function on $T^*( Q \times \mathbb{R})$. Locally, we have that $\tilde{H}(q^i,z, \lambda P_i,  \lambda P_z) = \lambda \tilde{H}(q^i,z,P_i,P_z)$, for all $\lambda \in \mathbb{R}$. Equivalently, one can write
\begin{equation}
	\tilde{H}(q^i,z,P_i,P_z) = -P_z\,  H(q^i,-P_i/P_z, z),
\end{equation}
for $P_z\neq 0$, where $H: T^*Q \times \mathbb{R} \to \mathbb{R}$, $H(q^i,p_i,z)= \tilde{H}(q^i,z,p_i,-1)$ is well defined.

With the above changes, we have identified the manifold $T^*Q \times \mathbb{R}$ as the projective bundle $\mathcal{P}(T^* (Q \times \mathbb{R}))$ of the cotangent bundle $T^*(Q \times \mathbb{R})$ taking out the points at infinity,  that is the subset defined by $\{P_z = 0\}$.

Following \cite[Section~4.1]{vds-2018}, the map
\begin{equation}\label{dehomogeneization}
  \begin{aligned}
    \Phi: T^*( Q \times \mathbb{R}) \setminus \{P_z = 0 \} &\to T^*Q \times \mathbb{R}\\
    (q^i,z,P_i,P_z) &\mapsto (q^i, -P_i/P_z, z) = (q^i, p_i ,z),
  \end{aligned}
\end{equation}
sends the Hamiltonian symplectic system $(T^*( Q \times \mathbb{R})\setminus \{P_z = 0 \} , \omega_{Q \times \mathbb{R}}, \tilde{H})$ onto the Hamiltonian contact system $(T^*Q \times \mathbb{R}, \eta_{Q}, H)$, where $\omega_{Q \times \mathbb{R}} = d q^i \wedge d P_i + d z \wedge d P_z$ and $\eta_Q = d z - p_i d q^i$ are the canonical symplectic and contact forms, respectively. Observe that the natural coordinates of $T^*Q\times\mathbb{R}$, denoted by $(q^i,p_i,z)$, correspond to the homogeneous coordinates in the projective bundle. In fact, the map $\Phi$ is  projectivization up to a minus sign; i.e., the map that sends each point in the fibers of $T^* (Q \times \mathbb{R})$ to the line that passes through it and the origin. 

The map $\Phi$ satisfies $\bar{H} = -P_z\Phi^* (H)$ and $\omega_Q = -d (P_z \Phi^* (\eta_Q))$

It can be shown that $\Phi$ provides a bijection between conformal contactomorphisms and homogeneous symplectomorphisms. Moreover, $\Phi$ maps homogeneous Lagrangian submanifolds $\mathcal{L} \subseteq T^*( Q \times \mathbb{R})$ onto Legendrian submanifolds $\mathbb{L} = \Phi(\mathcal{L}) \subseteq T^*Q \times \mathbb{R}$. Indeed, if $\mathcal{L}$ is homogeneous, then $\mathbb{L}$ is Legendrian if and only if $\mathcal{L}$ is Lagrangian. Moreover, the Hamilton equations for $\tilde{H}$ are transformed into the Hamilton equations for $H$, i.e., $\Phi_* X_{\tilde{H}} = X_H$. See~\cite{vds-2018,vds-2018a} for more details on this topics.

We also remark that this construction is symplectomorphic to the symplectification defined in~\cite{Iba}, which is given by
$$
	(T^*Q \times \mathbb R \times \mathbb R,\omega=e^{-t}(d\eta_Q+\eta_Q \wedge dt) = d(e^{-t} \eta_Q))
$$
where $t$ is the (global) coordinate of the second $\mathbb R$ factor with the ``symplectified'' Hamiltonian $\tilde{H}' = e^t H$  setting and then project it to the original contact manifold. That is, $\tilde{H}'$ such that
\begin{equation}\label{eq:sympl_hamiltonian}
	(pr_1)_*  X_{\tilde{H'}} =X_H ,
\end{equation}
where $pr_1 \colon T^*Q \times \mathbb R \times \mathbb R \to T^*Q \times \mathbb R$ is the projection onto the first two factors.

The following map provides the symplectomorphism
\begin{equation}
	\begin{split}
		\Psi: T^*(Q \times \mathbb{R})\cap \{P_z < 0 \} & \to T^*Q \times \mathbb{R} \times \mathbb{R}\\
		(q^i,z,P_i,P_z) &\to (q^i, -P_i/P_z, z, -\log (-P_z)) = (q^i, p_i ,z),
	\end{split}
\end{equation}
that is, $\Psi = (\Phi, -\log(-P_z)$. This map is a symplectomorphism that maps $\bar{H}$ onto $\bar{H}'$. Moreover it is a fiber bundle automorphism over $TQ\times\mathbb{R}$ (see the diagram below):
\begin{equation}
	\begin{tikzcd}
		T^*(Q \times \mathbb{R})\cap \{P_z < 0 \} \arrow[rr, "\Psi"] \arrow[rd, "\Phi"] &                      & T^*Q \times \mathbb{R} \times \mathbb{R} \arrow[ld, "pr_1"] \\
																						& TQ \times \mathbb{R} &                                                            
		\end{tikzcd}
\end{equation}
% We claim that 
% $$
% 	\tilde{H}=e^{-t}H
% $$
% is a solution of (\ref{eq:sympl_hamiltonian}). To show this note that we know the local form of a hamiltonian vector field in Darboux coordinates, but the induced coordinates $(q^i,p_i, z,t)$ need not be such coordinates. Thus, our first task is to express $X_{\tilde{H}}$ in the induced coordinates.  For this, let
% $$
% 	X_{\tilde{H}}=A^i\fracpartial{}{q^i}+B_i\fracpartial{}{p_i}+C\fracpartial{}{z}+D\fracpartial{}{t}.
% $$
% By definition,  $X_{\tilde{H}}$ must fulfill 
% $$
% \iota_{X_{\tilde{H}}}\omega = d\tilde{H}
% $$  

% On the right hand side we have
% $$
% 	d\tilde{H}=d(e^{-t}H)=e^{-t}(-Hdt+dH)=e^{-t}\left(\fracpartial{H}{q^i}dq^i+\fracpartial{H}{p_i}dp_i+\fracpartial{H}{z}dz-Hdt\right),
% $$
% while on the left hand side:
% \begin{align*}
% 	\iota_{X_{\tilde{H}}}\omega&=e^{-t}(A^i dp_i - B_i d q^i +C dt - D dz -p_i(A^i dt - D dq^i))\\
% 	&=e^{-t}((p_i D-B_i)dq^i+A^i dp_i +(C-p_i A^i)dt-Ddz).
% \end{align*}
% Therefore we obtain:
% \begin{align*}
% 	A^i&=\fracpartial{H}{p_i},\\
% 	B_ i&=-\left(\fracpartial{H}{q^i}+p_i\fracpartial{H}{z}\right),\\
% 	C&=p_i\fracpartial{H}{p_i}-H,\\
% 	D&=-\fracpartial{H}{z}.
% \end{align*}
% From the previous equations it is straightforward that $X_{\tilde{H}}$ satisfies (\ref{eq:sympl_hamiltonian}).

\subsection{Relations for the Hamilltonian vector field}

Now we will establish a relationship between solutions to the Hamilton–Jacobi problem in both scenarios. Suppose that
\begin{align*}
\tilde{\gamma}\colon Q\times\mathbb R &\to T^*(Q\times\mathbb R)\\
(q^i,z)&\mapsto(q^i,\tilde{\gamma}_j(q^i,z), z,\tilde{\gamma}_t(q^i, z))
\end{align*}
 is a solution of the symplectic Hamilton-Jacobi equation, i.e., $\tilde{\gamma}(Q\times\mathbb R)$ is Lagrangian and
$$
d(\tilde{H}\circ\tilde{\gamma})=0,
$$
or equivalently
$$
T\tilde{\gamma}\circ X_{\tilde{H}}^{\tilde{\gamma}}=X_{\tilde{H}}\circ \tilde{\gamma},
$$
where $X_{\tilde{H}}^{\tilde{\gamma}}=Tp\circ X_{\tilde{H}}\circ \tilde{\gamma}$ is the projected vector field and $p\colon T^*(Q\times\mathbb R)\to Q\times\mathbb R$ the canonical projection.   We want to use the solution $\tilde{\gamma}$ of the Hamilton-Jacobi problem in the symplectification (which we will often refer to as ``symplectic solution'') to obtain a section that is a solution in the contact setting (``contact solution'', for simplicity). We assume $\tilde{\gamma}_t(q^i, z) \neq 0$ and take $\gamma=\Phi\circ \tilde{\gamma} \colon Q\times\mathbb R \to T^*Q\times\mathbb R$. In local coordinates
\begin{align*}
	\gamma\colon Q\times\mathbb R &\to T^*Q\times\mathbb R\\
	(q^i,z)&\mapsto\left(q^i, \gamma_{j}(q^i,z) =-\frac{\tilde{\gamma}_j(q^i,z)}{\tilde{\gamma}_t(q^i, z)}, z\right)
\end{align*}

We can summarize the situation in the following commutative diagram:
\begin{equation}\label{symplectification_diagram}
	\begin{tikzcd}[row sep=1.4cm]
		Q\times\mathbb R \arrow[r, "\tilde{\gamma}"] \arrow[d, "X_{\tilde{H}}^{\tilde{\gamma}} = X_{{H}}^{{\gamma}}"'] \arrow[rr, "\gamma", bend left=49] & T^*(Q\times\mathbb R) \arrow[r, "\Phi"] \arrow[d, "X_{\tilde{H}}"] \arrow[l, "p", bend left] & T^*Q\times\mathbb R \arrow[d, "X_H"] \arrow[ll, "\pi", bend right] \\
		T(Q\times\mathbb R) \arrow[r, "T\tilde{\gamma}", dashed] \arrow[rr, "T \gamma"', dashed, bend right=49]                                          & T(T^*(Q\times\mathbb R)) \arrow[r, "T\Phi"] \arrow[l, "Tp", bend right]                      & T(T^*Q\times\mathbb R) \arrow[ll, "T\pi"', bend left]             
		\end{tikzcd}
	\end{equation}
We note that the projected vector fields $X_{\tilde{H}}^{\tilde{\gamma}}$ and $X_{{H}}^{{\gamma}}$ coincide. The dashed lines of $T \tilde{\gamma}$ (resp. $T \gamma$) commute if and only if $\tilde{\gamma}$ is a symplectic solution (resp. ${\gamma}$ is a contact solution) of the Hamilton-Jacobi problem.

	\begin{lemma}
	Let $H$ be a Hamiltonian and $\tilde{H}$ its symplectified version. Assume $\tilde{\gamma}_t(q^i, z)\neq 0$. Then $\tilde{\gamma}$ is a symplectic solution, or, equivalently, $X^{\tilde{\gamma}}_{\tilde{H}}$ and $X_{\tilde{H}}$ are $\tilde{\gamma}$-related if and only if $X^{{\gamma}}_{{H}}$ and $X_H$ are $\gamma$-related.
	\end{lemma}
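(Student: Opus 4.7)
The plan is to exploit two structural features of the map $\Phi$ that were established in Section~5.1: first, $\Phi$ is fiber-preserving over $Q\times\mathbb{R}$, i.e.\ $\pi\circ\Phi = p$, which is evident from the coordinate expression $\Phi(q^i,z,P_i,P_z) = (q^i,-P_i/P_z,z)$; second, $\Phi$ intertwines the Hamiltonian vector fields, $\Phi_*X_{\tilde H}=X_H$, equivalently $T\Phi\circ X_{\tilde H} = X_H\circ\Phi$. The assumption $\tilde\gamma_t\neq 0$ is exactly what ensures that the image of $\tilde\gamma$ lies in the open subset $\{P_z\neq 0\}$ on which $\Phi$ is a diffeomorphism, so that $\gamma=\Phi\circ\tilde\gamma$ is well defined and $\Phi$ is locally invertible along the image.

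The first step is to check that the two projected vector fields coincide, matching the label on the leftmost vertical arrow in diagram~\eqref{symplectification_diagram}. Starting from the definition and using the two properties of $\Phi$,
$$
X_H^\gamma \;=\; T\pi\circ X_H\circ\gamma \;=\; T\pi\circ T\Phi\circ X_{\tilde H}\circ\tilde\gamma \;=\; T(\pi\circ\Phi)\circ X_{\tilde H}\circ\tilde\gamma \;=\; Tp\circ X_{\tilde H}\circ\tilde\gamma \;=\; X_{\tilde H}^{\tilde\gamma}.
$$
This identification is completely independent of any solution condition.

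The second step is to transport $\tilde\gamma$-relatedness across $\Phi$. Assume $X_{\tilde H}\circ\tilde\gamma = T\tilde\gamma\circ X_{\tilde H}^{\tilde\gamma}$ and apply $T\Phi$ on the left. The left side becomes $T\Phi\circ X_{\tilde H}\circ\tilde\gamma = X_H\circ\Phi\circ\tilde\gamma = X_H\circ\gamma$; the right side becomes $T\Phi\circ T\tilde\gamma\circ X_{\tilde H}^{\tilde\gamma} = T(\Phi\circ\tilde\gamma)\circ X_{\tilde H}^{\tilde\gamma} = T\gamma\circ X_H^\gamma$, where in the last equality I use the coincidence of projected vector fields from Step~1. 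This yields exactly $X_H\circ\gamma = T\gamma\circ X_H^\gamma$, i.e.\ the contact $\gamma$-relatedness. The converse direction is the mirror computation: start from contact $\gamma$-relatedness and apply $T\Phi^{-1}$, which exists precisely because of the hypothesis $\tilde\gamma_t\neq 0$.

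I do not anticipate any genuine obstacle: the entire argument is a diagram chase packaged around the two defining properties of the symplectification map. The only point requiring mild care is the domain issue — ensuring $\Phi$ (and hence $T\Phi$) can be inverted along $\tilde\gamma$ — which is handled verbatim by the standing hypothesis $\tilde\gamma_t(q^i,z)\neq 0$. No further appeal to the Lagrangian or $d(\tilde H\circ\tilde\gamma)=0$ conditions is needed, since the lemma asserts only the equivalence of the two $\gamma$-relatedness properties, not the full Hamilton--Jacobi condition.
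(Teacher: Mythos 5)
Your Step 1 (the projected fields coincide) and your forward implication are correct and are exactly the paper's argument: $\pi\circ\Phi=p$ gives $X_H^\gamma=X_{\tilde H}^{\tilde\gamma}$, and applying $T\Phi$ to the $\tilde\gamma$-relatedness together with $\Phi_* X_{\tilde H}=X_H$ yields the $\gamma$-relatedness. The converse, however, rests on a false premise. The map $\Phi$ is \emph{not} a diffeomorphism on $\{P_z\neq 0\}$: its source $T^*(Q\times\mathbb{R})\setminus\{P_z=0\}$ has dimension $2n+2$, while its target $T^*Q\times\mathbb{R}$ has dimension $2n+1$. As the paper itself notes, $\Phi$ is projectivization up to a sign; it is a submersion whose fibres are punctured lines through the origin of each cotangent fibre, so $T\Phi$ has a one-dimensional kernel at every point (the fibrewise Euler scaling direction $P_i\,\partial/\partial P_i+P_z\,\partial/\partial P_z$). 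Hence ``apply $T\Phi^{-1}$'' is not an available move, and the hypothesis $\tilde\gamma_t\neq 0$ does not create one: it only guarantees that the image of $\tilde\gamma$ avoids $\{P_z=0\}$, i.e.\ stays in the domain of $\Phi$. This matters precisely in the direction you wave through: since $T\Phi$ kills the Euler direction, knowing $X_H\circ\gamma=T\gamma\circ X_H^\gamma$ determines $X_{\tilde H}\circ\tilde\gamma$ only up to a section of $\ker T\Phi$, so the converse is genuinely the nontrivial half of the lemma.

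The paper repairs exactly this point differently. Instead of a (nonexistent) global inverse, it constructs a partial inverse adapted to the section: the map $\map{\xi}{T^*Q\times\mathbb{R}}{T^*(Q\times\mathbb{R})}$ given by $(q^i,p_i,z)\mapsto\bigl(q^i,z,-p_i\,\tilde\gamma_t(q^i,z),\tilde\gamma_t(q^i,z)\bigr)$ (the displayed formula in the paper omits the last component, evidently a typo), which satisfies $\Phi\circ\xi=\mathrm{id}$, inverts $\Phi$ along the submanifold $\{P_z=\tilde\gamma_t\}$, and obeys $\tilde\gamma=\xi\circ\gamma$. One then chases diagram~\eqref{symplectification_diagram} with $T\xi$ in place of your $T\Phi^{-1}$: from $X_H\circ\gamma=T\gamma\circ X_H^\gamma$ one gets $T\xi\circ X_H\circ\gamma=T(\xi\circ\gamma)\circ X_H^\gamma=T\tilde\gamma\circ X_{\tilde H}^{\tilde\gamma}$, and it remains to identify $T\xi\circ X_H\circ\gamma$ with $X_{\tilde H}\circ\tilde\gamma$ — that is, to rule out a discrepancy along the Euler direction, which is where the structure of the symplectification (the homogeneity of $\tilde H$ and the specific form of $\xi$ through the point $\tilde\gamma(x)$) enters, rather than any invertibility of $\Phi$. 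To fix your proof, replace the appeal to $T\Phi^{-1}$ by this $\xi$-based chase (or an equivalent analysis of the $\ker T\Phi$ ambiguity); as written, your converse direction does not go through.
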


    \begin{proof}

    Assume that  $X^{\tilde{\gamma}}_{\tilde{H}}$ and $X_{\tilde{H}}$ are $\tilde{\gamma}$-related. Then, by the commutativity of the diagram~\eqref{symplectification_diagram} we see that $X^{{\gamma}}_{{H}}$ and $X_H$ are $\gamma$-related.

	Conversely, assume that $X^{{\gamma}}_{{H}}$ and $X_H$ are $\gamma$-related. Let $P_z \in \mathbb{R} \setminus \{ 0 \}$ and let
	\begin{equation}
\begin{split}
			\xi: T^* Q \times \mathbb{R} &\to T^*(Q \times \mathbb{R})\\
			(q^i,P_i,z) &\mapsto (q^i, z, -P_i \tilde{\gamma}_t(q^i,z))	
\end{split}	
\end{equation}

    We note that $\xi_{P}$ is the inverse of $\Phi$ along the submanifold $\{P_z = \gamma_t\} \subseteq T^*(Q \times \mathbb{R})$. In particular $\tilde{\gamma} = \xi \circ \gamma$. Looking at the diagram~\eqref{symplectification_diagram}, this implies that  $X^{\tilde{\gamma}}_{\tilde{H}}$ and $X_{\tilde{H}}$ are $\tilde{\gamma}$-related. 
\end{proof}

\begin{lemma}
    Assume that the image of $\tilde{\gamma} = (\tilde{\gamma}_Q,\tilde{\gamma}_t)$ is Lagrangian. Then the image of $\gamma$ is coisotropic and the images of $\gamma_{z_0}$ are Lagrangian if and only if $d_Q \tilde{\gamma_Q} = \tau \gamma_Q$ for some function $\tau:Q\times\mathbb{R} \to \mathbb{R}$.
    
    Conversely, if the image of $\gamma$ is coisotropic and the images of $\gamma_{z_0}$ are Lagrangian, then we can choose $\tilde{\gamma_t}$ so that the image of $\tilde{\gamma}$ is coisotropic. It is given by $\tilde{\gamma}_t = \pm \exp(g)$, where $g$ is a solution to the PDE
    \begin{equation}
        d_Q g + \gamma \lieD{{\partial }/{\partial z}} g = -\lieD{\partial / \partial z} \gamma.
        \end{equation}
\end{lemma}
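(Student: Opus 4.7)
The plan is to pass through local coordinates, translate both sides of the equivalence into closedness-type conditions, and then combine them via the algebraic relation $\gamma_j=-\tilde{\gamma}_j/\tilde{\gamma}_t$ coming from the dehomogenization $\Phi$. On the symplectic side, $\tilde{\gamma}(Q\times\mathbb{R})$ is Lagrangian in $(T^*(Q\times\mathbb{R}),\omega_{Q\times\mathbb{R}})$ iff the $1$-form $\tilde{\gamma}_j\,dq^j+\tilde{\gamma}_t\,dz$ on $Q\times\mathbb{R}$ is closed, i.e.\ $\partial_{q^i}\tilde{\gamma}_j=\partial_{q^j}\tilde{\gamma}_i$ and $\partial_z\tilde{\gamma}_j=\partial_{q^j}\tilde{\gamma}_t$. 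On the contact side, Theorem~\ref{thm:coisotropic_lagrangian_section} reduces the coisotropic-plus-Lagrangian-slices hypothesis to $d_Q\gamma=0$ together with $\lieD{\partial/\partial z}\gamma=\sigma\gamma$ for some $\sigma\in C^\infty(Q\times\mathbb{R})$.

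For the forward direction, substituting $\gamma_j=-\tilde{\gamma}_j/\tilde{\gamma}_t$ into $\partial_{q^i}\gamma_j-\partial_{q^j}\gamma_i$ and using both Lagrangian identities above, all terms cancel except
\[
\partial_{q^i}\gamma_j-\partial_{q^j}\gamma_i \;=\; \tilde{\gamma}_t^{-2}\bigl(\tilde{\gamma}_j\,\partial_z\tilde{\gamma}_i-\tilde{\gamma}_i\,\partial_z\tilde{\gamma}_j\bigr),
\]
so $d_Q\gamma=0$ is equivalent to $\tilde{\gamma}_Q\wedge\lieD{\partial/\partial z}\tilde{\gamma}_Q=0$, that is to $\lieD{\partial/\partial z}\tilde{\gamma}_Q=\tau'\tilde{\gamma}_Q$ for some $\tau'$, which after using $\tilde{\gamma}_Q=-\tilde{\gamma}_t\gamma_Q$ becomes the stated proportionality with $\gamma_Q$ (with $\tau=-\tilde{\gamma}_t\tau'$). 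Differentiating $\gamma_j=-\tilde{\gamma}_j/\tilde{\gamma}_t$ in $z$ and inserting $\partial_z\tilde{\gamma}_j=\tau'\tilde{\gamma}_j$ gives $\partial_z\gamma_j=(\tau'-\partial_z\log|\tilde{\gamma}_t|)\gamma_j$, so the second hypothesis of Theorem~\ref{thm:coisotropic_lagrangian_section} is automatic.

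For the converse, assume $d_Q\gamma=0$ and $\lieD{\partial/\partial z}\gamma=\sigma\gamma$, set $\tilde{\gamma}_t=\pm e^g$ and $\tilde{\gamma}_j=-\tilde{\gamma}_t\gamma_j$. Dividing the mixed Lagrangian identity $\partial_z\tilde{\gamma}_j=\partial_{q^j}\tilde{\gamma}_t$ by $\tilde{\gamma}_t$ and using $\partial_z\gamma_j=\sigma\gamma_j$ yields
\[
\frac{\partial g}{\partial q^j}+\gamma_j\,\frac{\partial g}{\partial z}\;=\;-\sigma\gamma_j,
\]
which is the displayed PDE since $\sigma\gamma_Q=\lieD{\partial/\partial z}\gamma$. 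The remaining identity $\partial_{q^i}\tilde{\gamma}_j=\partial_{q^j}\tilde{\gamma}_i$ reduces, via $d_Q\gamma=0$, to $d_Q\tilde{\gamma}_t\wedge\gamma_Q=0$; but the PDE already yields $d_Q g=-(\partial_z g+\sigma)\gamma_Q$, so $d_Q\tilde{\gamma}_t=\tilde{\gamma}_t\,d_Q g$ is proportional to $\gamma_Q$ and the condition is automatic.

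The main difficulty is bookkeeping: one must see that the single condition $d_Q\gamma=0$, once the Lagrangian identities for $\tilde{\gamma}$ are inserted, already forces the proportionality that in turn gives $\lieD{\partial/\partial z}\gamma\propto\gamma$, so only one PDE has to be solved in the converse. That PDE is a first-order linear system of $n$ equations in the single unknown $g$; its compatibility conditions reduce, by the computation above, to the coisotropic identity~\eqref{coiso3}, so local solvability follows by the method of characteristics once transverse initial data is prescribed on a hypersurface $\{z=z_0\}$.
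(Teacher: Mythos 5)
Your proposal follows the paper's own route almost step for step: both directions are funneled through Theorem~\ref{thm:coisotropic_lagrangian_section} into the pair of conditions $d_Q\gamma=0$ and $\lieD{\partial/\partial z}\gamma=\sigma\gamma$; your forward computation $\fracpartial{\gamma_j}{q^i}-\fracpartial{\gamma_i}{q^j}=\tilde{\gamma}_t^{-2}\bigl(\tilde{\gamma}_j\,\fracpartial{\tilde{\gamma}_i}{z}-\tilde{\gamma}_i\,\fracpartial{\tilde{\gamma}_j}{z}\bigr)$ is the same wedge identity the paper obtains (the paper evaluates both $d_Q\gamma$ and $(\lieD{\partial/\partial z}\gamma)\wedge\gamma$ and finds them equal, while you show $d_Q\gamma=0$ forces the proportionality and that the slice condition is then automatic --- equivalent bookkeeping); and in the converse you arrive at exactly the paper's PDE for $g=\log\lvert\tilde{\gamma}_t\rvert$ and correctly observe that the symmetry identity $\fracpartial{\tilde{\gamma}_j}{q^i}=\fracpartial{\tilde{\gamma}_i}{q^j}$ is automatic because the PDE makes $d_Q\tilde{\gamma}_t$ proportional to $\gamma$. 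All of that is sound and matches the paper.

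The genuine flaw is your final solvability claim. The system $A_i(g)=-\fracpartial{\gamma_i}{z}$ with $A_i=\fracpartial{}{q^i}+\gamma_i\fracpartial{}{z}$ is \emph{overdetermined}: $n$ equations for one unknown on the $(n+1)$-dimensional $Q\times\mathbb{R}$. The method of characteristics with data on a hypersurface is the existence mechanism for a \emph{single} first-order PDE, and it fails here on dimension grounds: the distribution spanned by the $A_i$ is $n$-dimensional and meets $T\{z=z_0\}$ in dimension $n-1$ wherever $\gamma\neq 0$, so the equations already constrain $g$ along $\{z=z_0\}$ and arbitrary initial data there cannot be prescribed --- the free data lives only on a curve transverse to the distribution (e.g.\ a $z$-line). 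What is actually needed, and what the paper invokes, is the Frobenius-type theorem for overdetermined systems with a commuting frame (Lee, Thm.~19.27): one must check $[A_i,A_j]=0$, which uses both $d_Q\gamma=0$ and the proportionality $\lieD{\partial/\partial z}\gamma=\sigma\gamma$, and the compatibility $A_i\bigl(\fracpartial{\gamma_j}{z}\bigr)=A_j\bigl(\fracpartial{\gamma_i}{z}\bigr)$, which the paper verifies in~\eqref{lift_submanifold_cond} using $\fracpartial{\gamma_j}{z}=\sigma\gamma_j$. You assert that the compatibility ``reduces to~\eqref{coiso3} by the computation above,'' which is essentially true, but that computation is nowhere in your write-up, and it is precisely where the hypothesis $\lieD{\partial/\partial z}\gamma=\sigma\gamma$ is consumed; it needs to be carried out, and the appeal to characteristics replaced by the integrability theorem.
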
\label{lemma_lift_submanifolds}
    \begin{proof}
    Let $\tilde{\gamma} = (\tilde{\gamma}_Q,\tilde{\gamma}_t)$ be such that its image is Lagrangian. That is,  $d \tilde{\gamma} = 0 $. Splitting the part in $Q$ and in $\mathbb{R}$, we see that this is equivalent to 
    \begin{equation}\label{tildegamma_lagrangian}
        \lieD{\partial/\partial z} \tilde{\gamma}_Q = d_Q \tilde{\gamma}_t, \quad d_Q \tilde{\gamma}_Q = 0.
    \end{equation}

    Now, $\gamma = -\tilde{\gamma}_Q / \tilde{\gamma}_t$. By Theorem~\ref{thm:coisotropic_lagrangian_section}, it is necessary that $d_Q\gamma = 0$ and $(\lieD{\partial/\partial z}  \gamma )\wedge \gamma = 0$. We compute
        \begin{equation}
            d_Q \gamma =  \frac{(d_Q\tilde{\gamma}_t) \wedge \tilde{\gamma}_Q}{\tilde{\gamma}_t^2} =
            \frac{(\lieD{\partial/\partial z}\tilde{\gamma}_Q) \wedge \tilde{\gamma}_Q}{\tilde{\gamma}_t^2} =
        \end{equation}
        \begin{equation}
            \begin{split}                
            (\lieD{\partial/\partial z}  \gamma ) \wedge \gamma &=  
            -\frac{(\lieD{\partial/\partial z}\tilde{\gamma}_t) \tilde{\gamma}_Q - 
            (\lieD{\partial/\partial z}\tilde{\gamma}_Q) \tilde{\gamma}_t
            }{\tilde{\gamma}_t^2} \wedge \frac{\tilde{\gamma_Q}}{\tilde{\gamma}_t} \\ &= 
               \frac{ 
            (\lieD{\partial/\partial z}\tilde{\gamma}_Q) \wedge \tilde{\gamma}_Q
            }{\tilde{\gamma}_t^2}
        \end{split}
        \end{equation}
        hence the images of $\gamma_{z_0}$ are Lagrangian and the image of $\gamma$ is coisotropic if and only if $\lieD{\partial/\partial z} (\tilde{\gamma}_Q)$ is proportional to $\tilde{\gamma}_Q$.

        Conversely, assume that $\gamma$ satisfies $d_Q \gamma = 0$ and $\lieD{\partial / \partial z} \gamma = \sigma \gamma$. We must find $\tilde{\gamma}_t$ so that~\eqref{tildegamma_lagrangian} are satisfied. Since $\tilde{\gamma}_Q = - \tilde{\gamma}_t \gamma$, we have that~\eqref{tildegamma_lagrangian} are equivalent to
        \begin{align}
            \lieD{{\partial }/{\partial z}} ( \tilde{\gamma}_Q ) &= -  (\lieD{{\partial }/{\partial z}} \tilde{\gamma}_t  + \sigma \tilde{\gamma}_t ) \wedge \gamma = d_Q \tilde{\gamma}_t \label{lift_submanifold_eq}\\
            d_Q (\gamma_Q) &= - d_Q \tilde{\gamma}_t \wedge \gamma = 0.
        \end{align}
        A solution for $\tilde{\gamma}_t $ on the first equation above clearly solves the second one. Since we look for nonvanishing $\tilde{\gamma}_t$,
        we let $g = \log \circ \rvert{\tilde{\gamma}_t}\lvert$ so that~\label{lift_submanifold_eq} is just
        \begin{equation}
            d_Q g + \gamma \lieD{{\partial }/{\partial z}} g = -\sigma \gamma = -\lieD{\partial / \partial z} \gamma,
        \end{equation}
        if we let
        \begin{equation}
            A_i = \frac{\partial }{\partial q^i} + \gamma^i \frac{\partial }{\partial z},
        \end{equation}
        this equation can be written as 
        \begin{equation}
            A_i(g) = -\frac{\partial \gamma_i}{\partial z}, 
        \end{equation}
        we note that this vector fields commute, indeed
        \begin{equation}
            \lbrack X_i,X_j\rbrack = \gamma_i \frac{\partial \gamma_j}{\partial z} -  \gamma_j \frac{\partial \gamma_i}{\partial z} = \sigma \gamma_i \gamma_j - \sigma \gamma_j \gamma_i = 0.
        \end{equation}
        If this PDE has local solutions, operating with the equations above, one has,
        \begin{equation*}
            A_i(\frac{\partial \gamma_j}{\partial z}) - A_j(\frac{\partial \gamma_i}{\partial z}) = 0.
        \end{equation*}
        This condition is clearly neccesary, and it is also sufficient by~\cite[Thm.~19.27]{leeManifolds}.  We have that
        \begin{equation}\label{lift_submanifold_cond}
            A_i(\frac{\partial \gamma_j}{\partial z}) - A_j(\frac{\partial \gamma_i}{\partial z})  =
            \gamma_i \frac{\partial^2 \gamma_j}{(\partial z)^2} -  \gamma_j \frac{\partial^2 \gamma_i}{(\partial z)^2} =
            \gamma_i \frac{\partial (\sigma \gamma_j)}{\partial z} -  \gamma_j \frac{\partial (\sigma \gamma_i)}{\partial z} = 0.
        \end{equation}
    \end{proof}

    Combining the last two results, we obtain a correspondence between symplectic and contact solutions to the Hamilton–Jacobi problem.
    \begin{theorem}\label{lemma_lift_submanifolds}
        Let $H$ be a Hamiltonian and $\tilde{H}$ its symplectified version. Then $\tilde{\gamma}\colon Q\times\mathbb R\to T^*(Q\times\mathbb R)$ is a solution of the symplectic Hamilton-Jacobi problem for $\tilde{H}$, if and only if $\gamma= \Phi\circ\tilde{\gamma}\colon Q\times\mathbb R \to T^*Q\times \mathbb R$ is a solution of the contact Hamilton-Jacobi problem for $H$ and $d_Q \tilde{\gamma}_Q = \tau \gamma_Q$ for some function $\tau:Q\times\mathbb{R} \to \mathbb{R}$.

        Conversely, given a contact solution $\gamma$ of the Hamilton-Jacobi equation, there exists a symplectic solutions $\tilde{\gamma}$ such that $\tilde{\gamma} = \pm \exp(g)$, where $g$ is a solution to the PDE
        \begin{equation}
            d_Q g + \gamma \lieD{{\partial }/{\partial z}} g = -\lieD{\partial / \partial z} \gamma.
            \end{equation}
    \end{theorem}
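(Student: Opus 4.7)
My strategy is to assemble the statement by stacking the two immediately preceding lemmas: the first relates symplectic and contact vector-field relatedness under the identification $\gamma = \Phi \circ \tilde{\gamma}$, and the second relates Lagrangianity of $\tilde{\gamma}$ to the coisotropic-with-Lagrangian-leaves condition on $\gamma$. A symplectic Hamilton--Jacobi solution $\tilde{\gamma}$ decomposes into two independent requirements: a Lagrangian image, plus $\tilde{\gamma}$-relatedness of $X_{\tilde{H}}^{\tilde{\gamma}}$ and $X_{\tilde{H}}$; a contact solution splits analogously into coisotropic image with Lagrangian $z$-leaves, plus $\gamma$-relatedness of $X_H^\gamma$ and $X_H$. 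So the theorem reduces to pairing the two lemmas one against the other.

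For the forward direction, starting from a symplectic solution $\tilde{\gamma}$, I would first invoke the first lemma to deduce that $X_H^\gamma$ and $X_H$ are $\gamma$-related. Then, using that $\tilde{\gamma}$ is Lagrangian, the second lemma yields simultaneously the coisotropy of $\gamma(Q\times\mathbb{R})$, the Lagrangianity of each leaf $\gamma_{z_0}(Q)$, and the proportionality condition $d_Q\tilde{\gamma}_Q = \tau\gamma_Q$. For the converse, given a contact solution $\gamma$, I would apply the constructive half of the second lemma to produce $\tilde{\gamma}_t = \pm\exp(g)$ by solving the PDE
\[
d_Q g + \gamma\,\lieD{\partial/\partial z} g = -\lieD{\partial/\partial z}\gamma,
\]
and set $\tilde{\gamma}_Q = -\tilde{\gamma}_t\gamma$. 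The second lemma guarantees that the resulting $\tilde{\gamma}=(\tilde{\gamma}_Q,\tilde{\gamma}_t)$ has Lagrangian image, and the first lemma used in reverse then promotes the $\gamma$-relatedness of the contact vector fields to $\tilde{\gamma}$-relatedness of the symplectic ones, certifying $\tilde{\gamma}$ as a symplectic solution.

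The only genuine analytic difficulty—the solvability of the PDE for $g$—was already discharged in the proof of the second lemma by a Frobenius-type argument using the commuting vector fields $A_i = \partial/\partial q^i + \gamma^i\,\partial/\partial z$ and the verification $A_i(\partial\gamma_j/\partial z) - A_j(\partial\gamma_i/\partial z) = 0$. Consequently the present theorem is a packaging exercise rather than a new computation; the only care required is to check that the two lemmas compose cleanly and that no hypothesis stronger than Lagrangianity of $\tilde{\gamma}$ (respectively coisotropy plus Lagrangian leaves for $\gamma$) slips into either direction.
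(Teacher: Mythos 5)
Your proposal is correct and takes exactly the paper's route: the paper gives no separate proof of this theorem beyond the sentence ``Combining the last two results,'' and your assembly --- the first lemma to transfer relatedness of $X_{\tilde H}^{\tilde\gamma}$, $X_{\tilde H}$ into relatedness of $X_H^\gamma$, $X_H$, and the second lemma for the Lagrangian $\leftrightarrow$ coisotropic-with-Lagrangian-leaves correspondence together with the constructive choice $\tilde{\gamma}_t = \pm\exp(g)$, $\tilde{\gamma}_Q = -\tilde{\gamma}_t\,\gamma$ --- is precisely that combination, including the observation that solvability of the PDE for $g$ was already settled in the second lemma's proof. The only hypothesis worth flagging explicitly (left implicit by both you and the paper) is $\tilde{\gamma}_t \neq 0$, required by the first lemma and for $\Phi\circ\tilde{\gamma}$ to be defined at all, which in your converse direction holds automatically since $\pm\exp(g)$ never vanishes.
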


\subsubsection{The alternative approach}

For each $z$, we have sections $\gamma=pr_2\circ\tilde{\gamma}_z\colon Q\to T^*Q\times\mathbb R$ of the form $(q^i)\mapsto (q^i, \tilde{\gamma}_j(q^i,z),\tilde{\gamma}_t(q^i,z))$, being $pr_2:(q^i,p_i,z,t)\mapsto(q^i, p_i, t)$. We know that $\gamma$ is a solution of the contact Hamilton-Jacobi problem if and only if $\gamma(Q)$ is Legendrian and
	$$
	H\circ\gamma=0.
	$$
	The condition that $\gamma(Q)$ is Legendrian is equivalent to
	$$
	\gamma_i=\fracpartial{\gamma_z}{q^i},
	$$
	where we write $\gamma(q^i)=(q^i,\gamma_j(q^i),\gamma_z(q^i))$, which by definition of $\gamma$  and using that $\gamma(Q\times\mathbb R)$ is Lagrangian reads
	$$
	\tilde{\gamma}_i=\fracpartial{\tilde{\gamma}_t}{q^i}=\fracpartial{\tilde{\gamma}_i}{z}
	$$
	and therefore $\tilde{\gamma}_i=e^zg(q^i)$, with $g_i$ functions depending only on the $(q^i)$. This can be summarized as follows:
	\begin{theorem}
	Suppose $\tilde{\gamma}\colon Q\times\mathbb R \to T^*(Q\times\mathbb R)$ is a solution of the symplectified Hamilton-Jacobi problem. Then 
	\begin{align*}
	\gamma\colon Q &\to T^*Q\times\mathbb R\\
	(q^i)&\mapsto (q^i, \tilde{\gamma}_j(q^i,z), \tilde{\gamma}_t(q^i,z))
	\end{align*}
	is a solution of the contact Hamilton-Jacobi problem if and only if
	$$
	H\circ\gamma=0 \text{ and } \tilde{\gamma}_i=e^zg_i
	$$
	\end{theorem}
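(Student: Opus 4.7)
The plan is to apply the characterization of contact Hamilton-Jacobi solutions given in the alternative approach for $X_H$ (see Definition before Section 4.1.2), and then translate the resulting Legendrian condition on $\gamma$ into a condition on the components of the symplectic solution $\tilde{\gamma}$. Recall that, in that formulation, a section $\gamma\colon Q\to T^*Q\times\mathbb{R}$ is a contact Hamilton-Jacobi solution if and only if $\gamma(Q)$ is a Legendrian submanifold of $(T^*Q\times\mathbb{R},\eta_Q)$ and $H\circ\gamma=0$. Since the latter already appears in the conclusion of the theorem, the task reduces to showing that, under the standing hypothesis that $\tilde{\gamma}$ solves the symplectic Hamilton-Jacobi equation (so, in particular, $\tilde{\gamma}(Q\times\mathbb{R})$ is Lagrangian), the Legendrian condition on $\gamma(Q)$ is equivalent to $\tilde{\gamma}_i = e^z g_i(q^j)$.

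For the Legendrian step, I would invoke Proposition~\ref{section_coisotropic}: $\gamma(Q)$ is Legendrian if and only if $\gamma = j^1 \gamma_z$ locally, i.e., $\gamma_i = \partial \gamma_z/\partial q^i$. Since by construction $\gamma_i(q^j) = \tilde{\gamma}_i(q^j,z)$ and $\gamma_z(q^j)=\tilde{\gamma}_t(q^j,z)$, this reads $\tilde{\gamma}_i = \partial \tilde{\gamma}_t/\partial q^i$. Now the Lagrangian hypothesis on $\tilde{\gamma}$, as recorded in~\eqref{tildegamma_lagrangian}, gives $\partial \tilde{\gamma}_i/\partial z = \partial \tilde{\gamma}_t/\partial q^i$. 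Substituting, the Legendrian condition becomes the first-order linear ODE
$$
\tilde{\gamma}_i \;=\; \frac{\partial \tilde{\gamma}_i}{\partial z},
$$
whose general solution in the variable $z$ is $\tilde{\gamma}_i(q^j,z) = e^z g_i(q^j)$ for arbitrary functions $g_i$ of the $q^j$ only. All steps are reversible, which yields the ``if and only if''.

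The argument is essentially an unwinding of definitions, so no deep obstacle is expected. The only point requiring care is making sure that the identification of components of $\gamma$ with those of $\tilde{\gamma}$ is compatible with the construction $\gamma = pr_2\circ\tilde{\gamma}_z$, and that the Lagrangian hypothesis on $\tilde{\gamma}$ is used explicitly to exchange $\partial \tilde{\gamma}_t/\partial q^i$ with $\partial \tilde{\gamma}_i/\partial z$; without this substitution one would be stuck with a PDE for $\tilde{\gamma}_t$ rather than the transparent exponential ODE. Once both are in place, integrating $\partial_z \tilde{\gamma}_i = \tilde{\gamma}_i$ delivers the claimed form $\tilde{\gamma}_i = e^z g_i$.
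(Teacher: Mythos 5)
Your proposal is correct and follows essentially the same route as the paper's own argument: reduce to the characterization of a contact solution as $\gamma(Q)$ Legendrian together with $H\circ\gamma=0$, rewrite the Legendrian condition $\tilde{\gamma}_i=\partial\tilde{\gamma}_t/\partial q^i$ using the Lagrangian identity $\partial\tilde{\gamma}_t/\partial q^i=\partial\tilde{\gamma}_i/\partial z$ coming from $d\tilde{\gamma}=0$, and integrate the resulting equation $\partial\tilde{\gamma}_i/\partial z=\tilde{\gamma}_i$ to get $\tilde{\gamma}_i=e^z g_i(q^j)$. The one point you single out as needing care---using the Lagrangian hypothesis to trade the $q^i$-derivative of $\tilde{\gamma}_t$ for the $z$-derivative of $\tilde{\gamma}_i$---is precisely the pivot of the paper's proof, so nothing further is required.
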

	
\subsection{Relations for the evolution vector field}
	
	We now consider the evolution field $\mathcal{E}_H$.  First, note that 
	$$
	Tpr_1\circ X_{\tilde{H}}=(\mathcal E_H-H\Reeb)\circ pr_1
	$$
	so that we cannot simply expect to project the vector field as before.  In fact, one can easily prove that under the assumption that the symplectified Hamiltonian is of the form
	$$
	\tilde{H}=F(t,H),
	$$
	then the associated vector field $X_{\tilde{H}}$ such that $i_{X_{\tilde{H}}}\omega=dH$ will never verify 
	$$
	Tpr_1\circ X_{\tilde{H}}=\mathcal E_H\circ pr_1
	$$
	
	We will now see that, despite this apparent obstruction, one can still establish some relations. Let $\tilde{\gamma}\colon Q\times\mathbb R\to T^*(Q\times\mathbb R)$ be a solution of the symplectified problem and define the section $\gamma=pr_2\circ\tilde{\gamma}_z\colon Q\to T^*Q\times\mathbb R$. This will be a solution of the associated Hamilton-Jacobi problem for the evolution field if and only if $\gamma(Q)$ is Legendrian and
	$$
	d(H\circ\gamma)=0.
	$$
	The Legendrian condition is equivalent to 
	$$
	\gamma_i=\fracpartial{\gamma_t}{q^i},
	$$
	or, using that $\tilde{\gamma}(Q\times\mathbb R)$ is Lagrangian like in the previous section,
	$$
	\tilde{\gamma}_i=e^zg_i(q^i).
	$$
	On the other hand, we know that $\tilde{\gamma}$ is a solution of the symplectic problem and therefore $d(\tilde{H}\circ\tilde{\gamma})=0$, which by definition means
	$$
	e^{-\tilde{\gamma}_t(q^i,z)}H(q^i,\tilde{\gamma}_j(q^i,z),z)=C
	$$
	with $C$ constant. Since $\gamma(q^i)=(q^i,\tilde{\gamma}_j(q^i,z),\tilde{\gamma}_t(q^i,z)$),  using the previous equation we obtain:
	$$
	H\circ\gamma=Ce^{\tilde{\gamma}_t(-.\tilde{\gamma}_t(-,z))}.
	$$
	Then the condition $d(H\circ\gamma)=0$ reads
	$$
	(H\circ\gamma)\left(\fracpartial{\tilde{\gamma}_t}{q^i}+\fracpartial{\tilde{\gamma}_t}{z}\fracpartial{\tilde{\gamma}_t}{q^i}\right)=0
	$$
	which occurs if and only if at every point $(q^i)$ we have:
	$$
	H\circ\gamma=0 \text{ or }\fracpartial{\tilde{\gamma}_t}{q^i}=0 \text{ or } \fracpartial{\tilde{\gamma}_t}{z}=-1.
	$$
	The functional form found for $H\circ\gamma$ tells us that it is either non-zero at every point or it vanishes everywhere.  If it does not vanish (everywhere), we claim that the second equation must be true.  Indeed,  suppose the first two equations do not hold.  Then the third equation must be true not just at a given point but in an open neighborhood and we would have
	$$
	\tilde{\gamma}_t=-z+h(q^i),
	$$
	where $h_i$ are arbitrary functions.   Using again that $\tilde{\gamma}(Q\times\mathbb R)$ is Lagrangian we could write
	$$
	\fracpartial{\tilde{\gamma}_t}{q_i}=\fracpartial{h}{q^i}=e^zg_i(q^i)=\fracpartial{\tilde{\gamma}_i}{z}
	$$
	which would imply that $h$ depends also on $z$.  Therefore, if $H\circ\gamma\neq 0$ then the second equation is true at every point. Using that $\tilde{\gamma}(Q\times\mathbb R)$ is Lagrangian we see this is equivalent to $\tilde{\gamma}_i=0$. Therefore we find:
	\begin{theorem}
	Let $\tilde{\gamma}\colon Q\times\mathbb R \to T^*(Q\times\mathbb R)$ be a solution of the symplectified problem with $\tilde{\gamma}_i=e^z g_i$, where $g_i:Q\to\mathbb R$, and consider the section 
	\begin{align*}
	\gamma\colon Q &\to T^*Q\times\mathbb R\\
	(q^i) & \mapsto (q^i,\tilde{\gamma}_j(q^i,z), \tilde{\gamma}_t(q^i,z))
	\end{align*}
	Then $\gamma$ is a solution of the contact problem for the evolution field if and only if one of the two following conditions is fulfilled:
	\begin{enumerate}
	\item $H\circ\gamma=0$,
	\item $\tilde{\gamma}_i=0$.
	\end{enumerate}

	\end{theorem}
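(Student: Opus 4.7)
The strategy is to reduce the claim to the characterization established in Section~\ref{sec:evolution_alt}: a section $\gamma:Q\to T^*Q\times\mathbb{R}$ solves the contact Hamilton--Jacobi problem for the evolution vector field if and only if $\gamma(Q)$ is Legendrian and $d(H\circ\gamma)=0$. I would verify each of these two conditions in turn, exploiting the fact that $\tilde{\gamma}(Q\times\mathbb{R})$ is already Lagrangian by hypothesis.

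The Legendrian part is essentially free: for each fixed $z$ the section $\gamma = pr_2\circ\tilde{\gamma}_z$ has image equal to a $1$-jet iff $\gamma_i = \partial\gamma_t/\partial q^i$, and combining this with the Lagrangian relation $d\tilde{\gamma}=0$ (which yields $\partial\tilde{\gamma}_t/\partial q^i = \partial\tilde{\gamma}_i/\partial z$) reproduces exactly the assumption $\tilde{\gamma}_i = e^z g_i(q^i)$. So the Legendrian condition is built into the hypotheses and requires no further work beyond this identification.

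For the differential condition, the key computation is to extract $H\circ\gamma$ from the symplectic Hamilton--Jacobi equation $d(\tilde{H}\circ\tilde{\gamma})=0$. Using the symplectification relation between $\tilde{H}$ and $H$ (from Section~\ref{sec:homogenization}), this yields $H\circ\gamma = C\,e^{\tilde{\gamma}_t}$ for some constant $C$. Differentiating along $Q$ gives $d(H\circ\gamma) = (H\circ\gamma)\,d_Q\tilde{\gamma}_t$, so the condition $d(H\circ\gamma)=0$ splits into the pointwise alternative: either $H\circ\gamma=0$ at that point, or $\partial\tilde{\gamma}_t/\partial q^i = 0$ for all $i$. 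Because $Ce^{\tilde{\gamma}_t}$ is either everywhere zero or nowhere zero, this dichotomy is in fact global. In the second case, applying the Lagrangian relation once more gives $\partial\tilde{\gamma}_i/\partial z = 0$, and combining with $\tilde{\gamma}_i = e^z g_i$ forces $g_i=0$, i.e., $\tilde{\gamma}_i=0$. This produces precisely the two alternatives stated.

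The main obstacle I anticipate is keeping the case analysis tight: one must exclude the spurious scenario where $\tilde{\gamma}_t$ depends on $z$ in such a way (e.g.\ $\tilde{\gamma}_t = -z + h(q)$, which was flagged in the preparatory computation preceding the theorem) that $H\circ\gamma$ stays nonzero while some $\partial\tilde{\gamma}_t/\partial q^i$ does not vanish. This is ruled out by the Lagrangian identity $\partial h/\partial q^i = e^z g_i$, which demands the right-hand side be $z$-independent and thus forces $g_i=0$, collapsing back to case (2). Beyond this bookkeeping, the proof is routine, relying only on the relations between $\tilde{\gamma}$ and $\gamma$ already encoded in the preceding lemmas.
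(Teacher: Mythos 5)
Your proposal is correct and follows essentially the same route as the paper: the same reduction to the Legendrian condition plus $d(H\circ\gamma)=0$, the same extraction of $H\circ\gamma = C e^{\tilde{\gamma}_t}$ from $d(\tilde{H}\circ\tilde{\gamma})=0$ (whence the everywhere-zero/nowhere-zero dichotomy), the same disposal of the spurious branch $\tilde{\gamma}_t = -z + h(q^i)$ via the Lagrangian identity $\partial\tilde{\gamma}_t/\partial q^i = \partial\tilde{\gamma}_i/\partial z = e^z g_i$, and the same conversion of $d_Q\tilde{\gamma}_t = 0$ into $\tilde{\gamma}_i = 0$. The only cosmetic difference is that the paper keeps the extra factor coming from the $z$-argument of $H$ being $\tilde{\gamma}_t(q^i,z)$, yielding a three-way pointwise alternative, whereas you fold that case into the exclusion step you describe; the content is identical.
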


\section{Examples}
\subsection{Particle with linear dissipation}
Consider the Hamilltonian $H$
\begin{equation}
    H(q,p,z) = \frac{p^2}{2m} + V(q) + \lambda z,
\end{equation}
where $\lambda \in \mathbb{R}$ is a constant. The extended phase space is $T^*Q \times \mathbb{R} \simeq \mathbb{R}^3$.

The Hamiltonian and evolution vector field are given by
\begin{align}
    X_H &= \frac{p}{m} \frac{\partial }{\partial q} 
    -\left( \frac{\partial V}{\partial q} + \lambda z \right) \frac{\partial }{\partial p} +
    \left( \frac{p^2}{2m} -V(q) -\lambda z \right) \frac{\partial }{\partial z}, \\
    \mathcal{E}_H &= \frac{p}{m} \frac{\partial }{\partial q} 
    -\left( \frac{\partial V}{\partial q} + \lambda z \right) \frac{\partial }{\partial p} +
    \frac{p^2}{m} \frac{\partial }{\partial z}.
\end{align}

Assume that $\gamma : Q \to T^* Q \times \mathbb{R}$ is a section of the canonical projection $T^* Q \times \mathbb{R} \to Q$, that is,
\begin{equation}
    \gamma(q) = (q,\gamma_p(q), \gamma_z(q)).
\end{equation}

We assume that $\gamma(Q)$ is a Legendrian submanifold of $T^*Q \times \mathbb{R}$ as in Section~\ref{sec:evolution_alt}; then,
\begin{equation}
    \gamma_p(q) = \frac{\partial \gamma_z}{\partial q},
\end{equation}
and $\mathcal{E}_H$ and $\mathcal{E}^\gamma_H$ are $\gamma$-related if and only if
\begin{equation}
    H \circ \gamma = k,
\end{equation}
for a constant $k \in \mathbb{R}$. Then, the Hamilton–Jacobi equation becomes
\begin{equation}
    H(\gamma(q)) = \frac{\gamma_p^2}{2m} + V(q) + \lambda \gamma_z = k,
\end{equation}
or, equivalently,
\begin{equation}\label{eq:hjex1}
    \frac{\left( \frac{\partial \gamma_z}{\partial q} \right)^2}{2m} + V(q) + \lambda \gamma_z = k,
\end{equation}
which is a linear ordinary differential equation.

A general solution of the Hamilton–Jacobi equation~\eqref{eq:hjex1} is then 
\begin{equation}
    \gamma_p(q) = \exp(-2m\lambda q) \int (2mk - 2mV(q)) \exp(2m\lambda q) dq.
\end{equation}

\subsection{Application to thermodynamic systems}\label{apthersys}

We consider thermodynamic systems in the so called \emph{energy representation}. Hence
the \emph{thermodynamic phase space}, representing the extensive variables,  is the manifold $T^* Q  \times \mathbb{R}$, equipped with its canonical contact form
\begin{equation}
  \eta_Q = d U - p^i d q^i.
\end{equation}
The local coordinates on the configuration manifold $Q$ are $(q^i,U)$, where $U$ is the internal energy and $q^i$'s denote the rest of extensive variables.
Other variables, such as the entropy, may be chosen instead of the internal energy, by means of a Legendre transformation.

The state of a thermodynamic system always lies on the equilibrium submanifold $\mathbb{L}\subseteq T^* Q  \times \mathbb{R}$, which is a Legendrian submanifold. The pair $(T^* Q  \times \mathbb{R}, \mathbb{L})$ is a \emph{thermodynamic system}. The equations (locally) defining $\mathbb{L}$ are called the \emph{state equations} of the system.

On a thermodynamic system $(T^* Q  \times \mathbb{R}, \mathbb{L})$, one can consider the dynamics generated by a Hamiltonian vector field $X_H$ associated to a Hamiltonian $H$. If this dynamics represents \emph{quasistatic processes}, meaning that at every time the system is in equilibrium, that is, its evolution states remain in the submanifold $\mathbb{L}$, it is required for the contact Hamiltonian vector field $X_H$ to be tangent to $\mathbb{L}$. This happens if and only if $H$ vanishes on $\mathbb{L}$.

Using Hamilton-Jacobi theory, one sees that a section $\gamma$ satisfied $H \circ \gamma = 0$ if and only if $X^\gamma_H$ and $X_H$ are $\gamma$-related.

%Equivalently, by section~5, one can consider the \emph{extended thermodynamic phase space} $T^* (Q \times \mathbb{R})$ with its canonical symplectic form
%\begin{equation}
  %\omega_{Q\times\mathbb{R}} = 
   %d q^i \wedge d P_i + d S \wedge d P_S.
%\end{equation}
%In this formulation, a thermodynamic system is a tuple $(T^* (Q  \times \mathbb{R}), \mathcal{L}))$, where $\mathcal{L}$ is a homogeneous Lagrangian submanifold. Dynamics are given by a homogeneous Hamiltonian $\tilde{K}$. See~\cite{vds-2018} for details and recall we have identified, in Section \ref{sec:homogenization}, the bundle $T^*Q\times\mathbb{R}$ with the projective bundle $\mathcal{P}(T^* (Q \times \mathbb{R}))$.

\subsubsection{The classical ideal gas}
This example is fully described in~\cite{goshTermo}.

The classical ideal gas is described by the following variables.
\begin{itemize}
    \item $U$: internal energy,
    \item $T$: temperature,
    \item $S$: entropy,
    \item $P$: pressure,
    \item $V$: volume,
    \item $\mu$: chemical potential,
    \item $N$: mole number.
\end{itemize}
Thus, the thermodynamic phase space is $T^* \mathbb{R}^3 \times \mathbb{R}$ and the contact $1$-form is
\begin{equation}
    \eta = d U - T d S + P d V - \mu d N
\end{equation}

The Hamilltonian function is
\begin{equation}
    H = T S - R N T + \mu N - U,
\end{equation}
where $R$ is the constant of ideal gases. The Reeb vector field is $\Reeb = \frac{\partial }{\partial U}$.

The Hamilltonian and evolution vector fields are just
\begin{align}
    X_H &= (S-RN)\fracpartial{}{S} + N \fracpartial{}{N}+P\fracpartial{}{P} + RT\fracpartial{}{\mu} + U \fracpartial{}{U} \\
    \mathcal{E}_H &=(S-RN)\fracpartial{}{S} + N \fracpartial{}{N}+P\fracpartial{}{P} + RT\fracpartial{}{\mu} + (TS-RNT+\mu N) \fracpartial{}{U}.
\end{align}
The Hamilonian vector field here represents an isochoric and isothermal process on the ideal gas.

Assume that $\gamma: \mathbb{R}^3 \to T^* \mathbb{R}^3 \times \mathbb{R}$ is the section locally given by
\begin{equation}
    \gamma(S,V,N) = (S, V, N, \gamma_T, \gamma_P, \gamma_\mu \gamma_U).
\end{equation}
we know that $\gamma(\mathbb{R}^3)$ is a Legendrian submanifold of $(T^* \mathbb{R}^3 \times \mathbb{R}, \eta)$ if and only if,
\begin{align*}
    \gamma_T &= \frac{\partial \gamma_U}{\partial S},\\
    \gamma_P &= - \frac{\partial \gamma_U}{\partial V},\\
    \gamma_\mu &= \frac{\partial \gamma_U}{\partial N}.
\end{align*}

The Hamilton–Jacobi equation is
\begin{equation}
    (H \circ \gamma)(S,V,N) = (S - R N)\gamma_T + N \gamma_\mu - \gamma_U = k,
\end{equation}
for some $k \in \mathbb{R}$. That is,
\begin{equation}
    (H \circ \gamma) (S,V,N) = (S - R N) \frac{\partial \gamma_U}{\partial S} + N \frac{\partial \gamma_U}{\partial N} - \gamma_U = k.
\end{equation}
This is a first order linear PDE, whose solution is given by
\begin{equation}
    \gamma_U  (S,V, N) = k \operatorname{arcsinh}{\left (\frac{S}{\sqrt{- S^{2} + \left(- 7 N + S\right)^{2}}} \right )} + F(- S^{2} + (R N - S)^{2} , V),
\end{equation}
with $F:\mathbb R^2\to \mathbb R$ an arbitrary function. The case $k=0$, which is the one relevant for the thermodynamic interpretation, is given by
\begin{equation}
    \gamma_U  (S,V, N) = F(- S^{2} + (R N - S)^{2} , V).
\end{equation}

\section*{Acknowledgements}

We acknowledge the financial support from the MINECO Grant PID2019-106715GB-C21. Manuel Laínz wishes to thank MICINN and ICMAT for a FPI-Severo Ochoa predoctoral contract PRE2018-083203.  Álvaro Muñiz thanks ICMAT for the ``Grant Programme Severo Ochoa – ICMAT: Introduction to Research 2020'' and Fundación Barrié for its fellowship for postgraduate studies.

\end{document}